\declaretheorem[style=plain, numberwithin=section]{theorem}
\declaretheorem[style=definition,name=Definition,qed=$\blacksquare$, numberwithin=section, sibling=theorem]{definition}
\declaretheorem[style=definition,name=Example,qed={$ \diamondsuit$}, numberwithin=section, sibling=theorem]{example}
\newtheorem{proposition}[theorem]{Proposition}
\newtheorem{remark}[example]{Remark}
\renewcommand{\epsilon}{\varepsilon}%
\newcommand{\SL}{\operatorname{SL}}
\newcommand{\GL}{\operatorname{GL}}
\newcommand{\SU}{\operatorname{SU}}
\newcommand{\U}{\operatorname{U}}
\newcommand{\SO}{\operatorname{SO}}
\newcommand{\CC}{\mathbb C}
\newcommand{\RR}{\mathbb R}
\newcommand{\ZZ}{\mathbb Z}
\renewcommand{\SS}{\mathbb{S}}
\newcommand{\im}{\mathrm i}
\begin{document}
\title{Four-qubit critical states}

\author{Luke Oeding, Ian Tan}
\thanks{Department of Mathematics and Statistics, Auburn University, Auburn, AL (\{\href{mailto:oeding@auburn.edu}{oeding}, \href{mailto:yzt0060@auburn.edu}{yzt0060}\}@auburn.edu).}

\begin{abstract} Verstraete, Dehaene, and De Moor (2003) showed that SLOCC invariants provide entanglement monotones. We observe that many highly entangled or useful four-qubit states that appear in prior literature are stationary points of such entanglement measures. This motivates the search for more stationary points. We use the notion of critical points (in the sense of the Kempf-Ness theorem) together with Vinberg theory to reduce the complexity of the problem significantly. We solve the corresponding systems utilizing modern numerical nonlinear algebra methods and reduce the solutions by natural symmetries. This method produces an extended list of four-qubit stationary points, which includes all the critical states in the survey by Enr\'iquez et al (2016). To illustrate the potential for application, we discuss the use of these states to generate pure five-qubit and six-qubit quantum error correcting codes by reversing a construction of Rains (1996).
\end{abstract}

\maketitle

\section{Introduction}
Let $\mathcal{H}_n:=(\CC^2)^{\otimes n}$ denote the Hilbert space of unnormalized $n$-qubit pure state vectors. The group $\GL_2(\CC)^{\times n}$ has a natural representation $\rho:\GL_2(\CC)^{\times n}\to\GL(\mathcal{H}_n)$ defined by
\begin{equation}\label{eq:rep}
\rho(g_1,\dots,g_n)v_1\otimes\dots\otimes v_n=g_1v_1\otimes\dots\otimes g_n v_n
\end{equation}
where $g_i\in\GL_2(\CC)$ and $v_i\in \CC^2$.
Two important subgroups of $\GL_2(\CC)^{\times n}$ are the \textit{local unitary group} $\U_2^{\times n}$ and the \textit{SLOCC group} $\SL_2(\CC)^{\times n}$ (the group of stochastic local operations with classical communication). Both act on $\mathcal{H}_n$ by restricting $\rho$. Furthermore, the symmetric group $\mathfrak{S}_n$ acts on $\mathcal{H}_n$ by permuting tensor factors:
\[
\sigma.v_1\otimes\dots\otimes v_n = v_{\sigma(1)}\otimes\dots\otimes v_{\sigma(n)},\quad \sigma\in\mathfrak{S}_n.
\]
A polynomial $f:\mathcal{H}_n\to\CC$ is \textit{symmetric} if it is $\mathfrak{S}_n$-invariant.

In quantum information theory one studies the information-theoretical tasks that can be accomplished by utilizing the quantum phenomenon of entanglement. Naturally, one might ask for ways to quantify entanglement. Notions of measurements of entanglement vary in the literature \cite{Nielsen_Chuang}. In this paper, we are interested in entanglement measures of the form
\begin{equation*}
E:\mathcal{H}_n\to[0,\infty),\quad E(\varphi)=|f(\varphi)|^{1/m}
\end{equation*}
where $f\in\CC[\mathcal{H}_n]$ is a homogeneous symmetric SLOCC invariant polynomial of degree $m>0$.
As noted by Osterloh and Siewert \cite{Osterloh_2010}, such a function has the following properties that are desirable for an entanglement measure:
\begin{itemize}
    \item[1.] $E(\varphi)=0$ whenever $\varphi\in\mathcal{H}_n$ is separable (i.e. is a rank-one tensor),
    \item[2.] $E$ is invariant under the action of the local unitary group,
    \item[3.] $E$ is invariant under the action of the symmetric group, and
    \item[4.] $E$ is an entanglement monotone \cite{vidalMonotone}, i.e. it does not increase, on average, under local operations.
\end{itemize}
Let us briefly verify this claim. Since $f$ vanishes on rank-one tensors (this is true for any nonconstant SLOCC invariant), so does $E$. Note that any $A\in \U_2$ has the form $e^{it}A'$ for some $t\in\RR$ and $A'\in\SU_2$. Then the second and third items hold because $E$ is invariant under the SLOCC group action, multiplication by unit complex numbers, and the symmetric group action. The fourth item holds by \cite[Theorem~2]{vddm03}. Examples of entanglement measures with this form include the Christensen-Wong $n$-tangle \cite{WongNelson01}, and those arising from the invariant-comb approach of Osterloh and Siewert \cite{Osterloh_2010}.

In past work \cites{Borras_2007,HIGUCHI2000213,GourWallach2010,OsterlohHD,jaffali2023maximally}, researchers have been interested in finding highly entangled states by maximizing an entanglement measure $E$ over the real manifold $M\subset \mathcal{H}_n$ of unit-norm vectors. If $\varphi\in M$ is such a maximizer, then $\varphi$ is a stationary point of $E|_M$ in the sense that the gradient vanishes. In this paper, we argue that it is worthwhile to understand the complete set of stationary points. We focus on the four-qubit case $n=4$. An important observation is that many known highly entangled states turn out to be stationary points of $E|_M$ for natural choices of $f$.

\subsection{A selection of 4-qubit critical states}\label{sec:selection} We now present some interesting highly-entangled states $\varphi\in\mathcal{H}_4$. All of these states are equivalent to ones that appear in the survey of Enr\'{i}quez et al.~\cite{Enriquez_2016}; see also \cite{e17075063}. Note that these states have the property that every single-qubit reduced density matrix of $\varphi$ maximizes the von Neumann entropy. Such states are called ``critical'' (we define this term in \Cref{sec:critical}). We will write the basis vectors of $\mathcal{H}_n$ as ``ket'' vectors, so that
$
\ket{i_1 i_2 \dots i_n} := e_{i_1}\otimes e_{i_2}\otimes\dots\otimes e_{i_n}
$
where each $i_j\in\{0,1\}$ and $\{e_0,e_1\}$ is the basis for $\CC^2$.

The first state of interest is the well-known GHZ state
\[\ket{GHZ}=\frac{1}{\sqrt{2}}(\ket{0000}+\ket{1111}).\]
Next, the state
\[
\ket{MP}=\frac{1}{2}(\ket{0000}+\ket{0101}+\ket{1010}+\ket{1111})
\]
can be used to win pseudo-telepathy games such as the Mermin-Peres magic square game \cite{KRH2024}. The Yeo-Chua state
\[
\ket{YC}=\frac{1}{\sqrt{8}}(\ket{0000}-\ket{0011}-\ket{0101}+\ket{0110}+\ket{1001}+\ket{1010}+\ket{1100}+\ket{1111}),
\]
can be used to teleport an arbitrary 2-qubit entangled state \cite{YeoChua}. The Higuchi-Sudberry state
\[
\ket{HS}=\frac{1}{\sqrt{6}}[\ket{0011}+\ket{1100}+\omega(\ket{0101}+\ket{1010})+\omega^2(\ket{0110}+\ket{1001})],
\]
where $\omega=e^{2\pi\im/3}$, was found to maximize the average von Neumann entropy across all splittings into bipartite subsystems \cite{HIGUCHI2000213}.
The state
\[
\ket{BSSB} = \frac{1}{\sqrt{12}}[\ket{0110}+\ket{1011}+\im (\ket{0010}+\ket{1111})+(1+\im)(\ket{0101}+\ket{1000})],
\]
found via numerical search by Brown et al., is highly entangled considering the negative eigenvalues of all partial transposes \cite{Brown_2005}.
The cluster states
\begin{align*}
    \ket{C_1}&=\frac{1}{2}(\ket{0000}+\ket{0011}+\ket{1100}-\ket{1111}), \\
    \ket{C_2}&=\frac{1}{2}(\ket{0000}+\ket{0110}+\ket{1001}-\ket{1111}), \\
    \ket{C_3}&=\frac{1}{2}(\ket{0000}+\ket{0101}+\ket{1010}-\ket{1111}),
\end{align*}
can be used in one-way quantum computing; this has in fact been experimentally realized \cite{One-way}. The cluster states were also shown to be the only states that maximize the R\'enyi $\alpha$-entropy for all $\alpha\geq 2$ \cite{GourWallach2010}. Note that $\ket{C_1}$ and $\ket{C_2}$ are related by the transposition $(1,3)\in\mathfrak{S}_4$ swapping the first and third qubits, while $\ket{C_1}$ and $\ket{C_3}$ are related by the transposition $(2,3)\in\mathfrak{S}_4$ swapping the second and third qubits. Hence, the cluster states have the same value on $|f|$ for any symmetric $f:\mathcal{H}_4\to\RR$. The hyperdeterminant state
\[
\ket{HD}=\frac{1}{\sqrt{6}}(\ket{0001}+\ket{0010}+\ket{0100}+\ket{1000}+\sqrt{2}\ket{1111}),
\]
was claimed by Osterloh and Siewert \cite{OsterlohHD} to maximize the 4-qubit hyperdeterminant. Also found in \cite{OsterlohHD} is the state
\[
\ket{OS}=\frac{1}{2}(\ket{0001}+ \ket{0010}+\ket{1100}+\ket{1111})
\]
which has nonzero expectation for the so-called four-qubit sixth-order filter. Finally,
Gour and Wallach \cite{GourWallach2010} show that the state
\begin{multline*}
\ket{L}=\frac{1}{4}[(1+\omega)(\ket{0000}+\ket{1111})+(1-\omega)(\ket{0011}+\ket{1100}) \\ +\omega^2(\ket{0101}+\ket{0110}+\ket{1001}+\ket{1010})]
\end{multline*}
maximizes the $\alpha$-Tsallis entropy for all $\alpha>2$, while the state
\[
\ket{M}=\frac{1}{\sqrt{18}}[(1+\sqrt{3}\im)(\ket{0000}+\ket{1111})+(-1+\sqrt{3}\im)(\ket{0011}+\ket{1100})+\ket{0101}+\ket{1010}]
\]
maximizes the $\alpha$-Tsallis entropy for all $0<\alpha<2$. In \Cref{sec:nf} we introduce an algorithm for computing normal forms of critical states. Applying this algorithm, we find that, up to the joint action of the local unitary and symmetric groups, $\ket{YC}=\ket{C_1}=\ket{OS}$, $\ket{HS}=\ket{M}$, and $\ket{HD}=\ket{L}$.

\begin{table}[b]
    \centering
    \begin{tabular}{c|c|c|c|c|c|c}
        & $\ket{GHZ}$ & $\ket{MP}$ & $\ket{C_1}$  & $\ket{HS}$ & $\ket{HD}$ & $\ket{BSSB}$ \\
        \hline
        $|\mathcal{F}_1|$ & 6 & 6 & 0 & 0 & 0 & 0\\
        $|\mathcal{F}_3|$ & 9 & 6 & 0 & $2.\overline{6}$ &  $2.\overline{6}$ & 0\\
        $|\mathcal{F}_4|$ & 16.5 & 6 & 2.5 & 0 & 0 & 1.875\\
        $|\mathcal{F}_6|$ & 64.125 & 6 & 1.875 & $\sim$0.790 & $\sim$3.01 & 0 \\
        $|\text{Hdet}|$ & 0 & 0 & 0 & 0 & $\sim$5.08E{-5} & $\sim$1.53E-5
    \end{tabular}
    \medskip
    \caption{Absolute values of fundamental invariants and the hyperdeterminant at four-qubit stationary points.}\label{table}
\end{table}

The algebra of symmetric SLOCC invariant polynomials $f\in\CC[\mathcal{H}_4]$ is known to be generated by invariants $\mathcal{F}_1$, $\mathcal{F}_3$, $\mathcal{F}_4$, and $\mathcal{F}_6$ of degrees 2, 6, 8, and 12, respectively. The generating set is not unique; we choose generators from \cite{GourWallach2014}, which are written out explicitly in \Cref{sec:sym}. Let $M$ be the set of unit-norm vectors in $\mathcal{H}_4$. Remarkably, we find that for each of the states $\varphi\in M$ mentioned above, either $f(\varphi)=0$ or $\varphi$ is a nonvanishing stationary point of $|f|_M|^{1/m}$ where $f=\mathcal{F}_k$ for some $k=\{1,3,4,6\}$ or $f=\text{HDet}$ is the hyperdeterminant. \Cref{table} lists the values attained on $|f|$ for each distinct state up to the joint action of the local unitary group and the symmetric group $\mathfrak{S}_4$. A tilde in front of a number indicates that the number is rounded to three significant figures.

\subsection{Overview and organization} In this paper, we compute stationary points of the entanglement measures $|\mathcal{F}_1|$, $|\mathcal{F}_3|$, and $|\mathcal{F}_4|$. The computation relies on theoretical results related to Vinberg theory and the Kempf-Ness theorem. Six of the stationary points from our computation can be used to construct six-qubit absolutely maximally entangled states which give rise to families of pure codes; this illustrates the special entanglement properties of such stationary points.

In \Cref{sec:statcrit} we define the notions of stationary and critical points. An important fact is that, in our setting, nonvanishing stationary points are critical points. In \Cref{sec:vinberg} we bring in results from Vinberg's theory of $\theta$-groups. We define the Cartan subspace and Weyl group for $\mathcal{H}_4$, give a simple algorithm that finds equivalent representatives in the Cartan subspace for critical points, then give expressions for the invariants and their restrictions to the Cartan subspace. In \Cref{sec:compute} we compute nonvanishing stationary points. First we explain how to set up of systems of polynomial equations for the task. In the easiest case, we can describe, with a mathematical proof, all the stationary points of $|\mathcal{F}_1|$. We then find stationary points of $|\mathcal{F}_3|$ and $|\mathcal{F}_4|$ using techniques from numerical algebraic geometry. The states in \Cref{table} are among the solutions found. In \Cref{sec:code} we show an application of stationary points to the construction of absolutely maximally entangled states and pure quantum error correcting codes. We end in \Cref{sec:conclusions} with a discussion and conclusions.

\subsection{Notation and conventions} Let $V$ be an inner product space over a field $\mathbb{K}$. We denote the inner product with angle brackets:
\[
V\times V\to \RR,\quad (v,w)\mapsto \langle v,w\rangle.
\]
If $\mathbb{K}=\CC$, then $\langle v,w\rangle$ is the standard Hermitian inner product. If $\mathbb{K}=\RR$, then $\langle v,w\rangle$ is the standard Euclidean inner product. Since the Hermitian inner product restricts to the Euclidean inner product on real vectors, there is no conflict of notation. The inner product endows $V$ with a norm $\|v\|=\sqrt{\langle v,v\rangle}$.

We shall frequently identify a complex vector space $\CC^N$ with the real vector space $\RR^{2N}$ via
\begin{equation}\label{eq:embed}
\varphi=(z_1,\dots,z_N)\in \CC^N\mapsto \varphi'=(x_1,y_1,\dots,x_N,y_N)\in\RR^{2N}
\end{equation}
where $z_i=x_i+\im y_i$ for each $1\leq i\leq N$. We denote by $\SS^{k-1}$ the real smooth manifold consisting of unit-norm vectors in a $k$-dimensional real vector space. The norm on $\RR^{2N}$ induced by the map $\varphi\mapsto\varphi'$ agrees with the usual norm. Hence we understand that $\SS^{2N-1}\subset(\CC^d)^{\otimes n}$, where $N=d^n$, is the set of unit-norm vectors in $(\CC^d)^{\otimes n}$ and that this is equivalent to the set of unit-norm vectors in $\RR^{2N}$ via the identifications (as real normed spaces) $\mathcal{H}_n=\CC^N=\RR^{2N}$.

We write $\GL_d(\mathbb{K})$, $\SL_d(\mathbb{K})$, and $\SO_d(\mathbb{K})$ for the general linear group, special linear group, and special orthogonal linear group respectively, each consisting of linear maps $\mathbb{K}^d\to\mathbb{K}^d$. It is assumed that $\mathbb{K}=\CC$ when the field is not specified, so that, for example, $\GL_d=\GL_d(\CC)$. Subgroups of $\GL_d^{\times n}$ act on the space of tensors $(\CC^d)^{\otimes n}$ by restricting the obvious generalization of the representation $\rho:\GL_2(\CC)^{\times n}\to\GL(\mathcal{H}_n)$ given in equation~\eqref{eq:rep}.

\section{Stationary points and critical points}\label{sec:statcrit}
\subsection{Stationary points}
We recall the definition of a stationary point of a function on a manifold, then discuss some basic facts about such points.

\begin{definition}\label{def:stat}
    Let $M$ be a real smooth manifold of dimension $k$ and $g:M\to\RR$ a smooth map. We say that $x\in M$ is a \textit{stationary point} of $g$ if the gradient of $g\circ\alpha^{-1}$ vanishes at $x$, where $\alpha:U\to\RR^k$ is a chart on an open subset $U\subset M$ containing $x$.
\end{definition}
Note that $x\in M$ is a stationary point of $g:M\to\RR$ if and only if $\frac{d}{dt}\big|_{t=0}g\circ \gamma(t)=0$ for any smooth curve $\gamma:\RR\to M$ such that $\gamma(0)=x$. Supposing that $\alpha$ is centered at $x$, this follows from the equation
\begin{equation}\label{eq:curve}
\frac{d}{dt}\Big|_{t=0} g\circ\gamma(t)=\frac{d}{dt}\Big|_{t=0} g\circ\alpha^{-1}\circ\alpha\circ\gamma(t)=\langle\nabla(g\circ\alpha^{-1})(0),(\alpha\circ\gamma)'(0)\rangle.
\end{equation}
If the gradient of $g\circ\alpha^{-1}$ vanishes, so does the whole expression. Conversely, if \eqref{eq:curve} vanishes for every choice of $\gamma$, then $\nabla(g\circ\alpha^{-1})=0$ by the nondegeneracy of the inner product.

Suppose $f:M\to\CC$ is complex-valued. \Cref{prop:power} tells us that the nonvanishing stationary points of $|f|$ are the same as those of $|f|^p$ for any nonzero $p\in\RR$. It will be useful for us to take $p=2$, since $|f|^2=f_1^2+f_2^2$ where $f_1$ and $f_2$ are real-valued functions such that $f=f_1+\im f_2$.

\begin{proposition}\label{prop:power}
    Let $M$ and $g:M\to\RR$ be as in \Cref{def:stat}. Suppose $x\in M$ such that $g(x)\neq 0$. Then $x$ is a critical point of $g$ if and only if $x$ is a critical point of $g^p$, where $0\neq p\in\RR$.
\end{proposition}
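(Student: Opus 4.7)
The plan is a one-line chain-rule computation, combined with the observation that the hypothesis $g(x)\neq 0$ prevents the extra scalar factor from vanishing. First I would pass to a chart $\alpha\colon U\to\RR^{k}$ centered at $x$, and set $\tilde g := g\circ\alpha^{-1}$. By \Cref{def:stat}, $x$ is stationary for $g$ if and only if $\nabla \tilde g(0)=0$, and analogously for $g^{p}$.

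Since $\tilde g$ is continuous and $\tilde g(0)=g(x)\neq 0$, after shrinking $U$ we may assume $\tilde g$ has constant (nonzero) sign on $\alpha(U)$, so $\tilde g^{p}$ is a well-defined smooth function on $\alpha(U)$ for every $p\in\RR$. The ordinary chain rule in $\RR^{k}$ then gives
\[
\nabla\bigl(\tilde g^{p}\bigr)(y) \;=\; p\,\tilde g(y)^{p-1}\,\nabla\tilde g(y).
\]
Evaluating at $y=0$, the scalar $p\,\tilde g(0)^{p-1}=p\,g(x)^{p-1}$ is nonzero because $p\neq 0$ and $g(x)\neq 0$. Hence $\nabla(\tilde g^{p})(0)=0$ if and only if $\nabla \tilde g(0)=0$, which is the desired equivalence.

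There is essentially no obstacle to overcome; the only subtlety is making sense of the real power $g^{p}$ for arbitrary $p$, which is resolved by the nonvanishing of $g$ near $x$. This is also precisely the reason the vanishing hypothesis cannot be dropped: at a zero of $g$, the factor $g(x)^{p-1}$ either blows up or makes $g^{p}$ non-differentiable, and genuine zeros of $|f|$ (which is what is used after \Cref{prop:power}) typically fail to be smooth points of $|f|$ at all.
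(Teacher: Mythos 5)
Your proposal is correct and follows essentially the same route as the paper: pass to a chart centered at $x$, apply the chain rule $\nabla(\tilde g^{p})=p\,\tilde g^{p-1}\nabla\tilde g$, and note that the scalar factor is nonzero since $g(x)\neq 0$ and $p\neq 0$. The only difference is that you additionally justify the smoothness of $\tilde g^{p}$ by shrinking the chart so that $\tilde g$ has constant sign, a point the paper leaves implicit.
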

\begin{proof}
    Let $\alpha:U\to\RR^k$ be a chart centered at $x$ and let $F=g\circ\alpha^{-1}$. Then $\nabla(F^p)=pF^{p-1}\nabla F$. If $g(x)\neq 0$, then $\nabla(F^p)(0)=0$ if and only if $\nabla F(0)=0$.
\end{proof}

The set of stationary points is partitioned into equivalence classes by a natural group action, given in \Cref{prop:sgroup}. Since we can act on representatives, it suffices to find stationary points up to equivalence.

\begin{proposition}\label{prop:sgroup}
    Let $M$ and $g:M\to\RR$ be as in \Cref{def:stat}. Suppose $\Phi:M\to M$ is a smooth bijection such that $g=g\circ\Phi$. Then $x\in M$ is a stationary point of $g$ if and only if $\Phi(x)$ is a stationary point of $g$.
\end{proposition}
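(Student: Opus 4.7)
The plan is to transport the chart-based definition of stationary point (\Cref{def:stat}) through $\Phi$. Because that definition is local and uses only one chart at each point, it suffices to build, from a chart centered at $x$, a chart centered at $\Phi(x)$ in which the coordinate expression of $g$ is literally the same function.

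Concretely, I would begin by fixing any chart $\alpha : U \to \RR^k$ centered at $x$. Provided $\Phi^{-1}$ is smooth, the composite $\beta := \alpha \circ \Phi^{-1}$, with domain $\Phi(U)$, is a chart centered at $\Phi(x)$, and the key one-line computation is
\[
g \circ \beta^{-1} \;=\; g \circ \Phi \circ \alpha^{-1} \;=\; g \circ \alpha^{-1},
\]
where the second equality uses the invariance $g = g \circ \Phi$. Thus the local coordinate representations of $g$ at $x$ (via $\alpha$) and at $\Phi(x)$ (via $\beta$) agree as functions on $\alpha(U) \subseteq \RR^k$, so their Euclidean gradients at $0$ coincide. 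By \Cref{def:stat} one vanishes iff the other does, which is exactly the claimed equivalence.

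The main point I would want to pin down is the smoothness of $\Phi^{-1}$: a smooth bijection need not be a diffeomorphism in general (for example $x \mapsto x^3$ on $\RR$), so the hypothesis as stated is slightly weak for a purely chart-based argument. Fortunately, in every intended application of the proposition $\Phi$ is realized as the action of an element of $\U_2^{\times n}$, $\SL_2^{\times n}$, or $\mathfrak{S}_n$, which is automatically a diffeomorphism; I would therefore either strengthen the hypothesis to ``diffeomorphism'' or insert a short remark to this effect. An alternative route avoids charts entirely and uses the curve characterization following \eqref{eq:curve}: for any smooth curve $\gamma$ through $x$, the curve $\Phi\circ\gamma$ passes through $\Phi(x)$ and satisfies $\tfrac{d}{dt}\big|_{t=0} g(\Phi\circ\gamma(t)) = \tfrac{d}{dt}\big|_{t=0} g(\gamma(t))$. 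Surjectivity of $\gamma \mapsto \Phi\circ\gamma$ onto smooth curves through $\Phi(x)$ again needs $\Phi^{-1}$ to be smooth, so the obstacle is genuinely the same one, and it is the only nontrivial issue in the proof.
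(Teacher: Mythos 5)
Your proof is correct and is essentially identical to the paper's: the paper also takes $\alpha\circ\Phi^{-1}$ as a chart centered at $\Phi(x)$ and observes that $g\circ\Phi\circ\alpha^{-1}=g\circ\alpha^{-1}$, so the two coordinate representations share a gradient at $0$. Your caveat about needing $\Phi^{-1}$ to be smooth is well taken, but it applies equally to the paper's own argument (which uses $\alpha\circ\Phi^{-1}$ as a chart without comment), and it is harmless in all intended applications since every $\Phi$ used is a group action and hence a diffeomorphism.
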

\begin{proof}
    Let $x\in M$. If $\alpha:U\to \RR^{k}$ is a chart centered at $x\in M$ then $\alpha\circ\Phi^{-1}:\Phi(U)\to\RR^k$ is a chart centered at $\Phi(x)$. Since $g=g\circ\Phi$, we have
    $\nabla(g\circ\alpha^{-1})(0)=\nabla(g\circ\Phi\circ\alpha^{-1})(0)$. Hence $x=\alpha^{-1}(0)$ is stationary if and only if $\Phi(x)=\Phi\circ\alpha^{-1}(0)$ stationary.
\end{proof}

\subsection{Complex and real Lie groups}\label{sec:real} We explain how to interpret $\SL_N(\CC)$ as a real Lie group; the facts presented here are standard.

Recall the identification $\CC^N=\RR^{2N}$ via the map $\varphi\mapsto\varphi'$ given in \eqref{eq:embed}. While it is true that $\langle \varphi,\varphi\rangle=\langle \varphi',\varphi'\rangle$ (so the induced norm is the usual norm), it is not generally true that $\langle\varphi,\psi\rangle=\langle\varphi',\psi'\rangle$ for $\varphi,\psi\in\CC^N$. The correct formula reads
\begin{equation}\label{eq:bracket}
\langle \varphi,\psi\rangle=\langle \varphi',\psi'\rangle + \im \langle J^{\oplus N}\varphi',\psi'\rangle, \quad J=\begin{pmatrix}
    0 & -1 \\ 1 & 0
\end{pmatrix}.
\end{equation}
Similarly, matrices in $\CC^{N\times N}$ embed into $\RR^{2N\times 2N}$ by the map
\[
M=
\begin{pmatrix}
    z_{11} & \dots & z_{1N} \\
    \vdots & & \vdots \\
    z_{N1} & \dots & z_{NN}
\end{pmatrix}
\mapsto M'=
\begin{pmatrix}
    U(z_{11}) & \dots & U(z_{1N}) \\
    \vdots & & \vdots \\
    U(z_{N1}) & \dots & U(z_{NN})
\end{pmatrix},
\]
where $U(z_{ij})=\begin{pmatrix} x_{ij} & -y_{ij} \\ y_{ij} & x_{ij} \end{pmatrix}$ and $x_{ij},y_{ij}$ are the real and imaginary parts of $z_{ij}=x_{ij}+\im y_{ij}$. This embedding respects our identification of $\CC^N$ with $\RR^{2N}$, that is, $(M\varphi)'=M'\varphi'$. Furthermore, the embedding respects the matrix exponential, i.e. it makes the following diagram commute.
\[\begin{tikzcd}
	{\mathfrak{sl}_N^{\mathbb{C}}} & {\mathfrak{sl}_{2N}^{\mathbb{R}}} \\
	{\operatorname{SL}_N(\mathbb{C})} & {\operatorname{SL}_{2N}(\mathbb{R})}
	\arrow[from=1-1, to=1-2]
	\arrow["\exp"', from=1-1, to=2-1]
	\arrow["\exp", from=1-2, to=2-2]
	\arrow[from=2-1, to=2-2]
\end{tikzcd}\]

\subsection{The critical set}\label{sec:critical}
As in the previous section, make the identification $(\CC^d)^{\otimes n}=\CC^N=\RR^{2N}$, where $N=d^n$, via the map $\varphi\mapsto\varphi'$ given in \eqref{eq:embed}. The embedding $\SL_N(\CC)\mapsto\SL_{2N}(\RR)$ restricts to an embedding $\rho(\SL_d(\CC)^{\times n})\to \SL_{2N}(\RR)$. Let $G_\CC=\SL_d(\CC)^{\otimes n}$ and let $G_\RR\subset \GL(\RR^N)$ be the corresponding real Lie group. We say that $\varphi\in(\CC^d)^{\otimes n}$ is \textit{critical} if $\langle X\varphi,\varphi\rangle=0$ for all $X\in\text{Lie}(G_\CC)$. It is equivalent to say that $\langle X'\varphi',\varphi'\rangle=0$ for all $X'\in\text{Lie}(G_\RR)$. To see this, specialize equation \eqref{eq:bracket}:
\[
\langle X\varphi,\varphi\rangle=\langle X'\varphi',\varphi'\rangle + \im\langle J^{\oplus N}X'\varphi',\varphi'\rangle.
\]
Observing that $J^{\oplus N}X'=(\im X)'\in \text{Lie}(G_\RR)$ if and only if $X\in\text{Lie}(G_\CC)$, the claim follows.

\begin{proposition}[Kempf-Ness]\label{prop:K-N}
The point $\varphi\in\mathcal{H}_n$ is critical if and only if $\|g.\varphi\|\geq \|\varphi\|$ for all $g\in \SL_2^{\times n}$. If $\varphi$ is critical and $g\in \SL_2^{\times n}$ is such that $\|g.\varphi\|= \|\varphi\|$, then there exists $h\in \SU_2^{\times n}$ such that $g.\varphi=h.\varphi$.
\end{proposition}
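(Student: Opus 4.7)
The plan is to reduce the statement to a one-variable convexity argument via the polar (Cartan) decomposition of $\SL_2^{\times n}$. Set $K = \SU_2^{\times n}$ and write $\mathfrak{g}_\CC = \operatorname{Lie}(\SL_2^{\times n}) = \mathfrak{k} \oplus i\mathfrak{k}$, where $\mathfrak{k} = \mathfrak{su}_2^{\oplus n}$ (skew-Hermitian) and $i\mathfrak{k}$ consists of Hermitian traceless elements. Since $K$ acts by unitaries on $\mathcal{H}_n$ (the tensor product of unitaries is unitary for the standard Hermitian form), any $g \in \SL_2^{\times n}$ can be written as $g = k\exp(H)$ with $k \in K$ and $H \in i\mathfrak{k}$ Hermitian; then $\|g.\varphi\| = \|\exp(H).\varphi\|$. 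Thus the question of whether $\varphi$ is norm-minimal in its orbit is governed by the behavior of the real-valued function $t \mapsto \|\exp(tH).\varphi\|^2$ as $H$ ranges over Hermitian elements.

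The key technical input is that for each Hermitian $H$, the function $f_H(t) := \log\|\exp(tH).\varphi\|^2$ is convex. To see this, diagonalize $H$ in an orthonormal eigenbasis $\{e_i\}$ of $\mathcal{H}_n$, writing $\varphi = \sum c_i e_i$; one obtains
\[
\|\exp(tH).\varphi\|^2 = \sum_i e^{2t\lambda_i}|c_i|^2,
\]
which is a positive sum of exponentials, hence log-convex in $t$. A direct differentiation yields $f_H'(0) = 2\langle H\varphi,\varphi\rangle/\|\varphi\|^2$. Since every $X \in \mathfrak{g}_\CC$ decomposes uniquely as $X = Y + iH$ with $Y,H$ skew-Hermitian and Hermitian respectively, and since $\langle Y\varphi,\varphi\rangle$ is purely imaginary while $\langle H\varphi,\varphi\rangle$ is real, the criticality condition $\langle X\varphi,\varphi\rangle = 0$ for all $X \in \mathfrak{g}_\CC$ is equivalent to $\langle H\varphi,\varphi\rangle = 0$ for all Hermitian $H$. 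Combined with convexity, this gives both directions of the first assertion: criticality forces $f_H$ to attain its minimum at $t=0$ for every $H$, so $\|\exp(H).\varphi\|\ge \|\varphi\|$; conversely, if $\|g.\varphi\| \ge \|\varphi\|$ for all $g$, then $f_H'(0) = 0$ for every Hermitian $H$, which yields criticality.

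For the uniqueness claim, suppose $\varphi$ is critical and $\|g.\varphi\| = \|\varphi\|$, with $g = k\exp(H)$ as above. Then $f_H(1) = f_H(0)$; convexity together with $f_H'(0) = 0$ forces $f_H$ to be constant on $[0,1]$. The explicit formula $\sum_i e^{2t\lambda_i}|c_i|^2$ can only be constant in $t$ if $\lambda_i = 0$ whenever $c_i \neq 0$, which means $H\varphi = 0$ and hence $\exp(H)\varphi = \varphi$. Therefore $g.\varphi = k.\varphi$ with $k \in \SU_2^{\times n}$, as required. The main obstacle is setting up the polar decomposition and convexity cleanly; once those are in place, the derivative computation and eigenbasis argument are routine. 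A minor subtlety to address is ensuring that the Hermitian/unitary structure on $\mathcal{H}_n$ is compatible with the representation $\rho$, so that Hermitian elements of $\mathfrak{sl}_2^{\oplus n}$ act as genuinely Hermitian operators and the orthonormal diagonalization is legitimate.
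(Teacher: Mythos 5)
The paper does not prove this proposition at all: it simply cites \cite[Theorem~2]{Gour_2011}, so there is no in-paper argument to compare against. Your self-contained proof is the standard Kempf--Ness argument and it is correct: the polar decomposition $g=k\exp(H)$ with $k\in\SU_2^{\times n}$ and $H$ Hermitian traceless reduces everything to the one-parameter subgroups $\exp(tH)$; the spectral decomposition gives $\|\exp(tH).\varphi\|^2=\sum_i e^{2t\lambda_i}|c_i|^2$, whose log-convexity, first derivative $2\langle H\varphi,\varphi\rangle$ at $t=0$, and strict convexity unless $\lambda_i c_i=0$ for all $i$ deliver, respectively, the equivalence of criticality with norm-minimality and the rigidity statement $H\varphi=0$ when equality holds. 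Your reduction of the criticality condition to Hermitian directions is also right, since $\langle Y\varphi,\varphi\rangle$ is purely imaginary for skew-Hermitian $Y$ and $Y=-\im(\im Y)$ with $\im Y$ Hermitian. One notational slip: you write the decomposition as $X=Y+\im H$ with $Y$ skew-Hermitian and $H$ Hermitian, but then $\im H$ is again skew-Hermitian and the sum cannot exhaust $\mathfrak{g}_\CC$; you mean either $X=Y+H$ ($Y\in\mathfrak{k}$, $H\in\im\mathfrak{k}$) or $X=Y+\im H'$ with both $Y,H'\in\mathfrak{k}$. This does not affect the argument. What your route buys over the paper's citation is a short, elementary, and essentially representation-independent proof (it works verbatim for $\SL_d^{\times n}$ acting on $(\CC^d)^{\otimes n}$, matching the generality the paper actually uses in \Cref{sec:critical}); the cost is only the need to check, as you note, that $\rho$ carries Hermitian Lie algebra elements to Hermitian operators, which holds because $d\rho(H_1,\dots,H_n)=\sum_j I\otimes\cdots\otimes H_j\otimes\cdots\otimes I$ is Hermitian for the standard inner product on the tensor product.
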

\begin{proof}
    See \cite[Theorem~2]{Gour_2011}.
\end{proof}

In other words, \Cref{prop:K-N} states that $\varphi$ is critical if and only if it minimizes the norm on its $\SL_d^{\times n}$-orbit. If $\psi$ is another minimizer in the $\SL_d^{\times n}$-orbit, then $\psi$ is in fact in the $\SU_d^{\times n}$-orbit of $\varphi$.
\begin{proposition}\label{prop:entropy}
    The point $\varphi\in\mathcal{H}_n$ is critical if and only if all single-qubit reduced density matrices of $\varphi$ are proportional to the identity.
\end{proposition}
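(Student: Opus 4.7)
The plan is to unpack the Lie-algebraic criticality condition and match it term by term to a statement about partial traces. Concretely, the Lie algebra of $G_\CC=\SL_2(\CC)^{\times n}$ splits as a direct sum
\[
\mathrm{Lie}(G_\CC)=\mathfrak{sl}_2(\CC)^{\oplus n},
\]
and under $d\rho$ an element $X=(X_1,\dots,X_n)$ acts on $\varphi\in\mathcal{H}_n$ by $d\rho(X)=\sum_{i=1}^n I\otimes\cdots\otimes X_i\otimes\cdots\otimes I$ with $X_i$ in the $i$-th slot. Therefore $\langle d\rho(X)\varphi,\varphi\rangle=0$ for all $X$ in the Lie algebra if and only if, for each $i=1,\dots,n$ separately,
\[
\bigl\langle (I\otimes\cdots\otimes X_i\otimes\cdots\otimes I)\varphi,\varphi\bigr\rangle=0\quad\text{for every } X_i\in\mathfrak{sl}_2(\CC).
\]

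Next I would identify this bilinear form with a trace against the single-qubit reduced density matrix. By the standard defining property of the partial trace, for any operator $A$ on $\CC^2$,
\[
\bigl\langle (I\otimes\cdots\otimes A\otimes\cdots\otimes I)\varphi,\varphi\bigr\rangle=\mathrm{tr}\bigl(A\,\rho_i\bigr),
\]
where $\rho_i=\mathrm{tr}_{\hat i}(|\varphi\rangle\langle\varphi|)$ is the reduced density matrix at the $i$-th site. So the criticality condition on site $i$ becomes $\mathrm{tr}(X_i\rho_i)=0$ for every traceless $X_i\in\mathfrak{sl}_2(\CC)$.

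To finish I would invoke the nondegeneracy of the trace form. Decompose $\rho_i=\tfrac{\mathrm{tr}(\rho_i)}{2}I+B_i$ with $B_i$ traceless; since $\mathrm{tr}(X_i)=0$, the condition becomes $\mathrm{tr}(X_iB_i)=0$ for every $X_i\in\mathfrak{sl}_2(\CC)$. Because the bilinear form $(X,Y)\mapsto\mathrm{tr}(XY)$ is nondegenerate on $\mathfrak{sl}_2(\CC)$ (and since $\rho_i$ is Hermitian, $B_i$ is traceless Hermitian, so one can even test with $X_i=B_i$ and use $\mathrm{tr}(B_i^2)\ge 0$ with equality only when $B_i=0$), this forces $B_i=0$, i.e.\ $\rho_i$ is proportional to the identity. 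The converse implication is immediate: if $\rho_i\propto I$ then $\mathrm{tr}(X_i\rho_i)\propto\mathrm{tr}(X_i)=0$ for every traceless $X_i$. Doing this for each $i$ completes the equivalence.

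No real obstacle is expected; the argument is essentially bookkeeping between three equivalent descriptions of the same bilinear pairing (Lie-algebraic, tensor-expectation, partial-trace). The only item requiring a moment of care is checking that testing against complex $\mathfrak{sl}_2(\CC)$ is not strictly stronger than testing against the real form of traceless Hermitian matrices, which is harmless because both pairings are nondegenerate on $B_i$.
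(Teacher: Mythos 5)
Your argument is correct and complete: the reduction to one tensor slot at a time, the identification $\langle (I\otimes\cdots\otimes X_i\otimes\cdots\otimes I)\varphi,\varphi\rangle=\operatorname{tr}(X_i\rho_i)$ via the defining property of the partial trace, and the nondegeneracy of the trace form on the traceless part $B_i$ (tested with $X_i=B_i$, using that $B_i$ is Hermitian) together give exactly the stated equivalence. The paper itself offers no proof, deferring to \cite[Theorem~3]{Gour_2011}; your self-contained argument is the standard one underlying that reference, so there is nothing further to reconcile.
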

\begin{proof}
    See \cite[Theorem~3]{Gour_2011}.
\end{proof}

Note that the von Neumann entropy of a reduced density matrix \cite{Nielsen_Chuang} is maximized if and only if the matrix is proportional to the identity. Together with \Cref{prop:entropy}, this justifies a claim made at the beginning of \Cref{sec:selection}.

Our primary interest in critical points is due to \Cref{prop:crit}. A consequence is that, to find the nonvanishing stationary points of $|f|$, it suffices to search over the set of critical points.
\begin{proposition}\label{prop:crit}
    Let $f:(\CC^d)^{\otimes n}\to (\CC^d)^{\otimes n}$ be a homogeneous $\SL_d^{\times n}$-invariant of degree $m>0$ and let $N=d^n$. Suppose $\varphi\in\SS^{2N-1}$ is a nonvanishing stationary point of $|f(x)|$ restricted to $\SS^{2N-1}$. Then $\varphi$ is critical.
\end{proposition}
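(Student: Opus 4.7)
The plan is to convert the stationarity condition into an explicit one-parameter derivative along a well-chosen curve on $\SS^{2N-1}$, then exploit the $\SL_d^{\times n}$-invariance of $f$ to read off the vanishing inner-product condition defining criticality. Since $\SL_d^{\times n}$ does not preserve the norm, I cannot use $t\mapsto\exp(tX).\varphi$ directly; instead, for each $X\in\operatorname{Lie}(G_\CC)=\mathfrak{sl}_d^{\oplus n}$, I would use the normalized curve
\[
\gamma(t)=\frac{\exp(tX).\varphi}{\|\exp(tX).\varphi\|},
\]
which is smooth near $t=0$ because $\|\varphi\|=1$, and which satisfies $\gamma(0)=\varphi$ and $\gamma(t)\in\SS^{2N-1}$.

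First, by \Cref{prop:power}, since $|f(\varphi)|\neq 0$, stationarity of $|f|$ on $\SS^{2N-1}$ at $\varphi$ is equivalent to stationarity of the smooth function $|f|^2$. Using that $f$ is $\SL_d^{\times n}$-invariant and homogeneous of degree $m$, one obtains
\[
|f(\gamma(t))|^2=\frac{|f(\exp(tX).\varphi)|^2}{\|\exp(tX).\varphi\|^{2m}}=\frac{|f(\varphi)|^2}{\|\exp(tX).\varphi\|^{2m}}.
\]
Setting the $t=0$ derivative of this expression equal to zero and cancelling the nonzero factor $|f(\varphi)|^2$ collapses the stationarity condition to $\frac{d}{dt}\big|_{t=0}\|\exp(tX).\varphi\|^2=0$. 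A routine product-rule computation with the Hermitian inner product then yields $\operatorname{Re}\langle X\varphi,\varphi\rangle=0$.

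To upgrade this real condition to the full complex condition $\langle X\varphi,\varphi\rangle=0$, I would apply the same argument with $X$ replaced by $\im X$, which also lies in the complex Lie algebra $\mathfrak{sl}_d^{\oplus n}$. Since $\operatorname{Re}\langle\im X\varphi,\varphi\rangle=-\operatorname{Im}\langle X\varphi,\varphi\rangle$, this recovers the imaginary part. Hence $\langle X\varphi,\varphi\rangle=0$ for every $X\in\operatorname{Lie}(G_\CC)$, which is exactly the definition of criticality from \Cref{sec:critical}.

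The main subtlety, rather than a true obstacle, is managing the interplay of real and complex structures: the curve parameter $t$ is real and $|f|^2$ is real-valued, so differentiating along a single curve only extracts the real part of the Hermitian pairing, but the complex structure on $\mathfrak{sl}_d^{\oplus n}$ makes the imaginary part accessible for free via $X\leftrightarrow\im X$. The remaining steps are routine chain-rule and homogeneity computations, so I expect no substantive difficulty beyond setting up the normalized curve correctly so that the homogeneity of $f$ interacts cleanly with the invariance.
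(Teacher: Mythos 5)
Your proposal is correct and follows essentially the same route as the paper: both arguments reduce stationarity along the curve generated by $\exp(tX)$ to the vanishing of $\frac{d}{dt}\big|_{t=0}\|\exp(tX)\varphi\|^2$ via invariance and homogeneity of $f$, which yields the pairing $\langle X\varphi,\varphi\rangle$. The only cosmetic differences are that the paper argues by contrapositive using the real Lie algebra $\operatorname{Lie}(G_\RR)$ (having already established in \Cref{sec:critical} the equivalence of the real and complex criticality conditions), whereas you work directly with $\operatorname{Lie}(G_\CC)$ and recover the imaginary part by the substitution $X\mapsto\im X$ --- which is precisely the same observation the paper uses to prove that equivalence.
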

\begin{proof}
    Consider $\varphi$ to be an element of the real vector space $\RR^{2N}$. We prove the contrapositive. Suppose $\varphi$ is not critical. Then there exists $X\in\text{Lie}(G_\RR)$ such that $\langle X\varphi,\varphi\rangle\neq 0$. Set $\gamma(t)=\exp(tX)\varphi$. We have
    \begin{align*}
        \frac{d}{dt}\Big|_{t=0}  \|\gamma(t) \|^2 &= \frac{d}{dt}\Big|_{t=0} \langle \exp(tX)\varphi,\exp(tX)\varphi\rangle \\
        &= \langle \exp(tX)X\varphi,\exp(tX)\varphi\rangle + \langle\exp(tX)\varphi,\exp(tX)X\varphi\rangle \Big|_{t=0}\\
        &= 2\langle \exp(tX)X\varphi,\exp(tX)\varphi\rangle \Big|_{t=0}\\
        &= 2\langle X\varphi,\varphi\rangle.
    \end{align*}
    If $f(\varphi)\neq 0$, then 
    \begin{align*}
        \frac{d}{dt}\Big|_{t=0} |f(\gamma(t)/\|\gamma(t)\|)| &= \frac{d}{dt}\Big|_{t=0}\|\gamma(t)\|^{-m}|f(\varphi)| \\
        &=|f(\varphi)|\frac{d}{dt}\Big|_{t=0}(\|\gamma(t)\|^2)^{-m/2} \\
        &=|f(\varphi)|( -m\|\varphi\|^{-m-2}\langle X\varphi,\varphi\rangle )\neq 0.
    \end{align*}
    Therefore $\varphi$ is not stationary.
\end{proof}

\section{Vinberg theory}\label{sec:vinberg}
\subsection{A graded Lie algebra}\label{sec:graded} The results we need from Vinberg's theory rely on the fact that the SLOCC module $\mathcal{H}:=\mathcal{H}_4$ embeds into the $\ZZ_2$-graded Lie algebra
$
\mathfrak{so}_8\cong\mathfrak{so}_4^{\times 2}\oplus\mathcal{H}.
$
We refer the reader to \cite{WallachGIT} for details. We now describe a construction of this embedding following Chterental and Djokovic \cite{ChtDjo:NormalFormsTensRanksPureStatesPureQubits}.

Each tensor $\varphi\in\mathcal{H}$ corresponds to the matrix $\tilde{\varphi}\in\CC^{4\times 4}$ of a linear map $(\CC^2\otimes\CC^2)^*\to\CC^2\otimes\CC^2$ associated to $\varphi$. Explicitly, the correspondence is
\begin{equation}\label{eq:flatten}
\varphi=\sum_{i,j,k,l=0}^1 a_{ijkl}\ket{ijkl}\quad\mapsto\quad \tilde{\varphi}=\begin{pmatrix}
    a_{0000} & a_{0001} & a_{0010} & a_{0011} \\
    a_{0100} & a_{0101} & a_{0110} & a_{0111} \\
    a_{1000} & a_{1001} & a_{1010} & a_{1011} \\
    a_{1100} & a_{1101} & a_{1110} & a_{1111}
\end{pmatrix}.
\end{equation}
Let $\kappa:\SL_4\times\SL_4\to\GL(\CC^{4\times 4})$ denote the representation defined by $\kappa(A,B)(\tilde{\varphi})=A\tilde{\varphi}B$. Then the induced $\SL_2^{\times 4}$-action on $\CC^{4\times 4}$ factors through $\SL_4\times\SL_4$, having the form
\[\begin{tikzcd}
	{\operatorname{SL}_2^{\times 4}} & {\operatorname{SL}_4\times\operatorname{SL}_4} & {\operatorname{GL}(\mathbb{C}^{4\times 4})}
	\arrow[from=1-1, to=1-2]
	\arrow["\kappa", from=1-2, to=1-3]
\end{tikzcd}\]
where the first map is $(A_1,A_2,A_3,A_4)\mapsto (A_1\otimes A_2,(A_3\otimes A_4)^\top)$. There exists a double cover $\SL_2\times\SL_2\to\SO_4$ defined using conjugation $(A_1, A_2)\mapsto T(A_1\otimes A_2)T^*$ by the unitary matrix
\[
T=\frac{1}{\sqrt{2}}\begin{pmatrix}
        1 & 0 & 0 & 1\\
        0 & \im & \im & 0\\
        0 & -1 & 1 & 0\\
        \im & 0 & 0 & -\im
    \end{pmatrix}.
\]
Applying the change of coordinates $\tilde{\varphi}\mapsto R_{\varphi}:= T\tilde{\varphi}T^*$, the induced $\SL_2^{\times 4}$-action on $\CC^{4\times 4}$ now factors through $\SO_4\times\SO_4$, having the form
\[\begin{tikzcd}
	{\operatorname{SL}_2^{\times 4}} & {\operatorname{SO}_4\times\operatorname{SO}_4} & {\operatorname{GL}(\mathbb{C}^{4\times 4})}
	\arrow[from=1-1, to=1-2]
	\arrow["\kappa", from=1-2, to=1-3]
\end{tikzcd}\]
where the first map is $(A_1,A_2,A_3,A_4)\mapsto (T(A_1\otimes A_2)T^*, T(A_3\otimes A_4)^\top T^*)$. Conjugation by the diagonal matrix $I_4\oplus(-I_4)$ decomposes the Lie algebra $\mathfrak{so}_8$ of skew-symmetric matrices into a direct sum of eigenspaces consisting of matrices with the following block formats:
\[
\mathfrak{so}_8=\left\{\begin{pmatrix}
    A & 0 \\
    0 & B
\end{pmatrix}: A,B\in\mathfrak{so}_4\right\}\oplus
\left\{\begin{pmatrix}
    0 & R_{\varphi} \\
    -R_{\varphi}^\top & 0
\end{pmatrix}:\varphi\in\mathcal{H}
\right\}.
\]
This establishes the decomposition $\mathfrak{so}_8 \cong\mathfrak{so}_4^{\times 2}\oplus\mathcal{H}$. A calculation shows
\[
\left[
\begin{pmatrix}
    A & 0 \\
    0 & B
\end{pmatrix},
\begin{pmatrix}
    0 & R_{\varphi} \\
    -R_{\varphi}^\top & 0
\end{pmatrix}
\right]
=
\begin{pmatrix}
    0 & A R_{\varphi} + R_{\varphi} B^\top \\
    -(A R_{\varphi} + R_{\varphi} B^\top)^\top & 0
\end{pmatrix}
\]
so the Lie algebra representation $\mathfrak{so}_4^{\times 2}\to\text{Aut}(\mathcal{H})$ obtained from the adjoint action corresponds to the natural Lie group representation $\SO_4^{\times 2}\to\GL(\mathcal{H})$.

\subsection{The Cartan subspace}
The semisimple elements in $\CC^{4\times 4}\subset\mathfrak{so}_8$ are those that are diagonalizable by the action of $\SO_4\times\SO_4$. We choose a so-called \textit{Cartan subspace} to be the 4-dimensional subspace of diagonal matrices in the new coordinates where the $\SL_2^{\times 4}$-action factors through $\SO_4^{\times 2}$. Translating back to the original coordinates, the Cartan $\mathfrak{a}\subset \mathcal{H}$ is the complex span of
\begin{align*}
    u_1 &= \frac{1}{2}(\ket{0000} + \ket{0011} + \ket{1100} +\ket{1111}), &
    u_2 &= \frac{1}{2}(\ket{0000} - \ket{0011} - \ket{1100} +\ket{1111}), \\
    u_3 &= \frac{1}{2}(\ket{0101} + \ket{0110} + \ket{1001} + \ket{1010}), &
    u_4 &= \frac{1}{2}(\ket{0101} - \ket{0110} - \ket{1001} + \ket{1010}).
\end{align*}
We will identify $\mathfrak{a}$ with the set of vectors $z\in\CC^4$, writing \[(z_1,z_2,z_3,z_4):=z_1u_1+z_2u_2+z_3u_3+z_4u_4.\]
A calculation shows that
\[
\|z_1u_1+z_2u_2+z_3u_3+z_4u_4\|^2=|z_1|^2+|z_2|^2+|z_3|^2+|z_4|^2,
\]
so the induced norm on $\CC^4$ is the usual Hermitian norm.

The following result, together with \Cref{prop:sgroup,prop:crit}, allows us to further restrict the search for stationary points to the linear subspace $\mathfrak{a}$.
\begin{proposition}\label{prop:critorbit}
    The set of critical points in $\mathcal{H}$ is the $\SU_2^{\times 4}$-orbit of $\mathfrak{a}$.
\end{proposition}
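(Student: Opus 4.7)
The plan is to prove the two containments separately, starting with the easier direction $\SU_2^{\times 4}\cdot\mathfrak{a}\subseteq\{\text{critical points}\}$. Under the dictionary of \Cref{sec:graded}, each $a\in\mathfrak{a}$ becomes a \emph{diagonal} matrix $D\in\CC^{4\times 4}$ after flattening and conjugation by the unitary $T$, and the derived $\mathfrak{sl}_2^{\oplus 4}$-action on $\mathcal{H}$ is transported to $D\mapsto AD+DB$ with $A,B\in\mathfrak{so}_4$. Both operations preserve the Frobenius / Hermitian inner product, so the criticality condition for $a$ reduces to
\[
\operatorname{tr}\bigl((AD+DB)D^*\bigr)=\sum_{i=1}^4(A_{ii}+B_{ii})|D_{ii}|^2=0,
\]
which holds because $A$ and $B$ are skew-symmetric. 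Hence every $a\in\mathfrak{a}$ is critical. To extend to the full $\SU_2^{\times 4}$-orbit, I would use that $\SU_2^{\times 4}$ preserves the Hermitian form and normalizes $\operatorname{Lie}(G_\CC)$, so the defining condition $\langle X\psi,\psi\rangle=0$ for all $X\in\operatorname{Lie}(G_\CC)$ is $\SU_2^{\times 4}$-invariant.

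For the reverse inclusion, suppose $\varphi\in\mathcal{H}$ is critical. Kempf--Ness (\Cref{prop:K-N}) says $\varphi$ is a norm-minimizer on its $\SL_2^{\times 4}$-orbit, and in particular that orbit is closed; equivalently, $\varphi$ is semisimple for the $\theta$-group structure on $\mathfrak{so}_8\cong\mathfrak{so}_4^{\times 2}\oplus\mathcal{H}$. The main input from Vinberg theory (see Wallach \cite{WallachGIT}) is that every semisimple element of the odd graded piece $\mathcal{H}$ is $G_0$-conjugate to an element of the Cartan subspace $\mathfrak{a}$. Pulling this back along the covering $\SL_2^{\times 4}\to \SO_4^{\times 2}$ produces $g\in\SL_2^{\times 4}$ with $a:=g\cdot\varphi\in\mathfrak{a}$.

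To finish I would invoke the uniqueness half of \Cref{prop:K-N}. The first step of the argument shows that $a$ is itself critical, so both $\varphi$ and $a$ minimize the norm on their common orbit and therefore $\|g\cdot\varphi\|=\|a\|=\|\varphi\|$. Kempf--Ness then produces $h\in\SU_2^{\times 4}$ with $a=g\cdot\varphi=h\cdot\varphi$, so $\varphi=h^{-1}\cdot a\in\SU_2^{\times 4}\cdot\mathfrak{a}$. The main obstacle is the Vinberg diagonalization step: one must verify that the $\mathfrak{a}$ constructed in \Cref{sec:vinberg} is a genuine Cartan subspace (maximal abelian subspace of semisimple elements) for the $\ZZ_2$-grading of $\mathfrak{so}_8$, which is precisely what licenses the ``every semisimple element is conjugate into $\mathfrak{a}$'' theorem. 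Once that identification is in hand, the remainder is routine given the dictionary of \Cref{sec:graded} and Kempf--Ness.
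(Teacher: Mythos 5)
Your argument is correct, but it is genuinely more explicit than what the paper does: the paper's entire proof is a citation of \cite{WallachGIT}*{Proposition~3.75}, which states the result for general Vinberg pairs, whereas you essentially reprove that proposition in this special case. Your forward inclusion is a clean direct computation: the trace identity $\operatorname{tr}\bigl((AD+DB)D^*\bigr)=\sum_i(A_{ii}+B_{ii})|D_{ii}|^2=0$ is right (the flattening and conjugation by the unitary $T$ are isometries, the covering $\mathfrak{sl}_2^{\oplus 4}\to\mathfrak{so}_4\oplus\mathfrak{so}_4$ is surjective, and complex skew-symmetric matrices have zero diagonal), and the extension to the $\SU_2^{\times 4}$-orbit via unitarity and normalization of $\operatorname{Lie}(G_\CC)$ is exactly how one should argue. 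The reverse inclusion is also sound, and the way you combine the uniqueness half of \Cref{prop:K-N} with Vinberg's conjugation theorem to upgrade $\SL_2^{\times 4}$-conjugacy into $\mathfrak{a}$ to $\SU_2^{\times 4}$-conjugacy is precisely the mechanism underlying Wallach's proof. What your route buys is transparency: a reader sees concretely why diagonal matrices are critical and where the unitary group enters. What it costs is that you quietly use two facts not stated in the paper: (i) the implication ``critical $\Rightarrow$ closed orbit'' (the quoted \Cref{prop:K-N} only gives norm-minimization on the orbit, not closedness), and (ii) Vinberg's equivalence ``closed $G_0$-orbit $\Leftrightarrow$ semisimple'' together with the conjugacy of Cartan subspaces. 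Both are standard and you correctly flag the second as the main external input, along with the (easily checked) fact that the diagonal matrices form a maximal abelian subspace of semisimple elements in the degree-one piece of $\mathfrak{so}_8\cong\mathfrak{so}_4^{\times 2}\oplus\mathcal{H}$. So there is no gap in substance, only a dependence on the same Vinberg-theoretic background that the paper outsources wholesale to \cite{WallachGIT}.
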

\begin{proof}
    This is a special case of \cite[Proposition~3.75]{WallachGIT}.
\end{proof}
\subsection{The Weyl group} The normalizer and centralizer of $\mathfrak{a}$ are defined as the subgroups
\begin{align*}
    N(\mathfrak{a})&:=\{g\in\SL_2^{\times 4}: g.\mathfrak{a}=\mathfrak{a}\}\quad\text{and} \\
    C(\mathfrak{a})&:=\{g\in\SL_2^{\times 4}: g.x=x \text{ for all } x\in\mathfrak{a}\}
\end{align*}
respectively. The \textit{Weyl group} of $\mathfrak{a}$ is the quotient $W:=N(\mathfrak{a})/C(\mathfrak{a})$. The group $W$ acts on the Cartan subspace $\mathfrak{a}$ like the set of linear maps
\begin{equation}\label{eq:weylg}
(z_1,z_2,z_3,z_4)\mapsto (\epsilon_1 z_{\pi(1)},\epsilon_2 z_{\pi(2)},\epsilon_3 z_{\pi(3)},\epsilon_4 z_{\pi(4)}),
\end{equation}
such that $\pi\in\mathfrak{S}_4$, $\epsilon_i=\pm 1$, and $\epsilon_1\epsilon_2\epsilon_3\epsilon_4=1$ (see \cite[Exercise~3, p.~179]{WallachGIT}).
\begin{proposition}\label{prop:restriction}
    The restriction map $f\mapsto f|_\mathfrak{a}$ for $f\in\CC[\mathcal{H}]$ induces an isomorphism of algebras $\CC[\mathcal{H}]^{\SL_2^{\times 4}}\cong \CC[z_1,z_2,z_3,z_4]^W.$
\end{proposition}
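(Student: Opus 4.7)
The plan is to recognize the statement as the Chevalley-type restriction theorem in the Vinberg $\theta$-group setting of \Cref{sec:graded}: for the $\ZZ_2$-graded reductive Lie algebra $\mathfrak{so}_8 \cong \mathfrak{so}_4^{\times 2}\oplus\mathcal{H}$ with degree-one component $\mathcal{H}$, restriction to the Cartan subspace $\mathfrak{a}$ matches invariants of the degree-zero group on $\mathcal{H}$ with $W$-invariants on $\mathfrak{a}$. The proof thus has two pieces: verify that restriction is a well-defined algebra homomorphism, then establish the bijection.

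For well-definedness, any $f\in\CC[\mathcal{H}]^{\SL_2^{\times 4}}$ is in particular $N(\mathfrak{a})$-invariant, so $f|_\mathfrak{a}$ is an $N(\mathfrak{a})$-invariant polynomial on $\mathfrak{a}$; since $C(\mathfrak{a})$ acts trivially on $\mathfrak{a}$, this factors through $W = N(\mathfrak{a})/C(\mathfrak{a})$ to give an element of $\CC[z_1,z_2,z_3,z_4]^W$. Preservation of addition and multiplication is immediate.

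Injectivity I would obtain from the Kempf-Ness picture: by \Cref{prop:K-N} together with \Cref{prop:critorbit}, every closed $\SL_2^{\times 4}$-orbit meets $\mathfrak{a}$, and since closed orbits form a Zariski dense subset in the categorical quotient, the saturation $\SL_2^{\times 4}\cdot\mathfrak{a}$ is Zariski dense in $\mathcal{H}$. An invariant $f$ with $f|_\mathfrak{a}=0$ vanishes on this dense set by $\SL_2^{\times 4}$-invariance, hence on all of $\mathcal{H}$.

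Surjectivity is where I expect the main obstacle, and where I would cite Vinberg's theorem directly. The essential structural input, genuinely specific to $\theta$-groups, is that two elements of $\mathfrak{a}$ related by an element of $\SL_2^{\times 4}$ are already related by $W$. Granting it, a $W$-invariant polynomial $g$ on $\mathfrak{a}$ extends unambiguously to an $\SL_2^{\times 4}$-invariant function on the dense locus $\SL_2^{\times 4}\cdot\mathfrak{a}$ via $h\cdot x\mapsto g(x)$, and standard geometric invariant theory (equivalently, the fact that the categorical quotient map $\mathcal{H}\to\mathcal{H}\sslash\SL_2^{\times 4}$ factors through $\mathfrak{a}\to\mathfrak{a}/W$ as a bijection) promotes this to a regular polynomial on $\mathcal{H}$. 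This is precisely the content of Vinberg's theorem as formulated in \cite{WallachGIT}; the translation between $\SO_4\times\SO_4$-invariants on $\mathcal{H}$, which is Vinberg's natural setup, and $\SL_2^{\times 4}$-invariants is immediate because the homomorphism $\SL_2^{\times 4}\to\SO_4\times\SO_4$ constructed in \Cref{sec:graded} is a finite surjective cover and therefore induces the same invariant ring.
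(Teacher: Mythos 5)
Your argument is correct and ultimately rests on the same foundation as the paper, which simply cites this as a special case of the Chevalley-type restriction theorem for $\theta$-groups (Theorem~3.62 in \cite{WallachGIT}); your expanded sketch of well-definedness, injectivity via density of the saturation of $\mathfrak{a}$, and surjectivity via Vinberg's theorem is a reasonable unpacking of what that citation contains. The only loose point is the claim that the union of closed orbits is Zariski dense, which is not automatic for arbitrary representations but does hold here because generic elements of $\mathcal{H}$ are semisimple in the graded Lie algebra $\mathfrak{so}_8$ --- a fact that is itself part of the Vinberg-theoretic package you are invoking.
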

\begin{proof}
    This is a special case of \cite[Theorem~3.62]{WallachGIT}.
\end{proof}

\subsection{Critical states on the Cartan}\label{sec:nf} Given a critical point $\varphi\in\mathcal{H}$, we explain how to find a point $\varphi'\in\mathfrak{a}$ in the Cartan subspace equivalent to $\varphi$ under the joint action of the local unitary group $\U_2^{\times 4}$ and the symmetric group $\mathfrak{S}_4$.

As discussed in \Cref{sec:graded}, there is an equivariant isomorphism
\[
\mathcal{H}\to\CC^{4\times 4},\quad \varphi\mapsto R_\varphi
\]
where the representation $\SL_2^{\times 4}\to\GL(\CC^{4\times 4})$ factors through $\SO_4^{\times 2}$. The group $\SO_4^{\times 2}$ acts by left and right multiplication on the space of matrices $\CC^{4\times 4}$. Let $\{e_1,e_2,e_3,e_4\}$ be the standard basis for $\CC^4$ and $\{E_{ij}:1\leq i,j\leq 4\}$ be the basis of operators
$
E_{ij}e_k = \delta_{jk}e_i
$
for $\CC^{4\times 4}$. The basis vectors $u_1,\dots,u_4$ for $\mathfrak{a}$ correspond to the basis operators $E_{11},\dots,E_{44}$ which span the subspace of diagonal matrices. Thus, by \Cref{prop:critorbit}, every critical $\varphi\in\mathcal{H}$ corresponds to a matrix $R_\varphi$ that is $\SO_4^{\times 2}$-equivalent to a diagonal matrix. Define $\tau:\CC^{4\times 4}\to \CC^{4\times 4}$ by $\tau(R_\varphi)=R_\varphi R_\varphi^\top.$ The set of eigenvalues of $\tau(R_\varphi)$ is invariant under $\SO_2^{\times 4}$ since
\[
\tau(AR_\varphi B^\top) = AR_\varphi R_\varphi^\top A^\top = A \tau(R_\varphi) A^\top
\]
for all $(A,B)\in\SO_4^{\times 2}$. Observing that
\[\tau(\lambda_1 E_{11} + \lambda_2 E_{22} + \lambda_3 E_{33} + \lambda_4 E_{44})=\lambda_1^2 E_{11} + \lambda_2^2 E_{22} + \lambda_3^2 E_{33} + \lambda_4^2 E_{44},
\]
we obtain an algorithm for computing $\varphi'$ in two steps.
\begin{enumerate}
    \item[1.] Compute the eigenvalues $\mu_1,\mu_2,\mu_3,\mu_4$ of $\tau(R_\varphi)$.
    \item[2.] Set $\varphi' = \sqrt{\mu_1}u_1 + \sqrt{\mu_2} u_2 + \sqrt{\mu_3} u_3 + \sqrt{\mu_4}u_4$.
\end{enumerate}
The $\mathfrak{S}_4$-representation $\mathfrak{S}_4\to\GL(\mathcal{H})$ restricts to a subrepresentation $\pi:\mathfrak{S}_4\to\GL(\mathfrak{a})$ on $\mathfrak{a} \cong \CC^4$. Let $\sigma_i\in\mathfrak{S}_4$ denote the transposition $(i,i+1)$. The symmetric group $\mathfrak{S}_4$ is generated by $\sigma_1$, $\sigma_2$, and $\sigma_3$. Written as matrices, these have the form (also noted in \cite{GourWallach2014})
\begin{equation}\label{eq:sym4}
\pi(\sigma_1)=\pi(\sigma_3)=\begin{pmatrix}
    1 & 0 & 0 & 0 \\
    0 & 1 & 0 & 0 \\
    0 & 0 & 1 & 0 \\
    0 & 0 & 0 & -1
\end{pmatrix}\quad \text{and}\quad
\pi(\sigma_2)=\frac{1}{2}\begin{pmatrix}
    1 & 1 & 1 & 1 \\
    1 & 1 & -1 & -1 \\
    1 & -1 & -1 & 1\\
    1 & -1 & 1 & -1
\end{pmatrix}.
\end{equation}
Using the actions of $\pi(\sigma_1)$ \eqref{eq:sym4} and the  Weyl group \eqref{eq:weylg} we see that neither the order of the eigenvalues $\mu_1,\dots,\mu_4$ nor the choice of square roots are important in choosing $\varphi'$.

Using the algorithm described above and further simplifying by the actions of the Weyl group, the symmetric group, and the circle group
$
\mathbb{T}:=\{e^{\im t}:t\in\RR\}
$
(which acts by scalar multiplication), we find that the states introduced in \Cref{sec:selection} have the following representatives in $\mathfrak{a}$. For ease of notation, we do not normalize the vectors.
\begin{align*}
    \ket{GHZ}&\cong(1,1,0,0), & \ket{MP}&\cong(1,0,0,0), \\
    \ket{YC}\cong\ket{C_1}\cong\ket{OS}&\cong(1,\im,0,0), &\ket{HS}\cong\ket{M}&\cong(\sqrt{3},\im,\im,\im), \\
    \ket{HD}\cong\ket{L}&\cong(1,e^{\pi\im/3},e^{2\pi\im/3},0), & \ket{BSSB}&\cong(1,\im,e^{-\pi \im/4},e^{\pi \im/4}).
\end{align*}
It was previously observed \cite{Enriquez_2016} that $\ket{HS}$ and $\ket{M}$ are equivalent. The fact that $\ket{HD}$ and $\ket{L}$ are equivalent is no surprise either, since it was claimed (likely based on numerical computations) in \cite{OsterlohHD} that $\ket{HD}$ is the unique global maximizer of the norm of the four-qubit hyperdeterminant, while the same thing was conjectured about $\ket{L}$ in \cite{GourWallach2014}. This conjecture was later proved by Chen and \DJ okovi\'c \cite{PhysRevA.88.042307}.


\subsection{SLOCC invariants}
The algebra $\CC[z_1,z_2,z_3,z_4]^W$ is freely generated by the invariants
\[
\mathcal{E}_0(z) = z_1 z_2 z_3 z_4\quad\text{and}\quad \mathcal{E}_i(z) = z_1^{2i} + z_2^{2i} + z_3^{2i} + z_4^{2i}\text{ for } 1\leq i\leq 3.
\]
Identifying $\mathcal{H}$ with the space of matrices $\tilde{\varphi}\in\CC^{4\times 4}$ via the map \eqref{eq:flatten}, we define the following polynomials (which appear in \cite{GourWallach2010}):
\[
\mathcal{G}_0(\tilde{\varphi}) = \det(\tilde{\varphi})\quad\text{and}\quad \mathcal{G}_i(\tilde{\varphi})=\text{tr}((\tilde{\varphi}(J\otimes J)\tilde{\varphi}^\top (J\otimes J))^i)\text{ for } 1\leq i\leq 3
\]
with $J$ as in \eqref{eq:bracket}. The polynomials $\mathcal{G}_i$ are $\SL_2^{\times 4}$-invariant due to the fact that $\SL_2$ preserves the symplectic bilinear form associated to $J$. These $\SL_2^{\times 4}$-invariants correspond to the $W$-invariants via the restriction map, i.e. $\mathcal{G}_i|_\mathfrak{a}=\mathcal{E}_i$ for all $0\leq i\leq 4$. Note that $\mathcal{G}_1$ is the unique $\SL_2^{\times 4}$-invariant in degree 2 and $|\mathcal{G}_1|^2$ coincides with the Christensen-Wong 4-tangle \cite{WongNelson01}.

\subsection{Symmetric invariants}\label{sec:sym} We are particularly interested in those SLOCC invariants which are also symmetric. The algebra consisting of such polynomials is known \cite{GourWallach2014} to be generated by $\mathcal{F}_1$, $\mathcal{F}_3$, $\mathcal{F}_4$, $\mathcal{F}_6$ whose restrictions have the form 
\[
\mathcal{F}_k|_{\mathfrak{a}}(z) = \sum_{i<j}(z_i-z_j)^{2k} + \sum_{i<j}(z_i+z_j)^{2k}, \text{ for } k\in\{1,3,4,6\}.
\]
Written in terms of the polynomials $\mathcal{E}_i$, these read
\begin{align*}
    \mathcal{F}_1|_{\mathfrak{a}}&=6\mathcal{E}_1,\\
\mathcal{F}_3|_{\mathfrak{a}} &= 30\mathcal{E}_1\mathcal{E}_2-24\mathcal{E}_3,\\
\mathcal{F}_4|_{\mathfrak{a}} &= -20\mathcal{E}_1^4 + 120\mathcal{E}_1^2 \mathcal{E}_2 + 480 \mathcal{E}_0^2 + 10\mathcal{E}_2^2 - 104\mathcal{E}_1\mathcal{E}_3,
\end{align*}
and
\begin{multline*}
\mathcal{F}_6|_{\mathfrak{a}} = -148\mathcal{E}_1^6  + 565\mathcal{E}_1^4 \mathcal{E}_2 + 5460\mathcal{E}_0^2 \mathcal{E}_1^2  + 540\mathcal{E}_1^2 \mathcal{E}_2^2 \\  - 570\mathcal{E}_1^3 \mathcal{E}_3 + 2160\mathcal{E}_0^2 \mathcal{E}_2 - 15\mathcal{E}_2^3  - 610\mathcal{E}_1\mathcal{E}_2\mathcal{E}_3 + 244\mathcal{E}_3^2.
\end{multline*}
Since the restriction map is an isomorphism (\Cref{prop:restriction}), one obtains corresponding expressions for the generators $\mathcal{F}_k$ by substituting $\mathcal{E}_i=\mathcal{G}_i$ for $0\leq i\leq 3$.

\subsubsection{Hyperdeterminant} Miyake \cite{Miyake_2002} gives an expression for the four-qubit hyperdeterminant restricted to $\mathfrak{a}$. In our basis, it reads
\[
\text{Hdet}|_\mathfrak{a}(z)= \prod_{i< j} (z_i^2-z_j^2)^2.
\]
In terms of the symmetric generators $\mathcal{G}_1=\frac{1}{6}\mathcal{F}_1$, $\mathcal{F}_3$, $\mathcal{F}_4$, and $\mathcal{F}_6$, the hyperdeterminant is
\begin{multline*}
    \text{Hdet} = 314928000^{-1} (-23794560\mathcal{G}_1^{12}   + 14450400\mathcal{G}_1^9 \mathcal{F}_3 - 6828300\mathcal{G}_1^8 \mathcal{F}_4 - 2211120\mathcal{G}_1^6 \mathcal{F}_3^2 \\ + 2043360\mathcal{G}_1^5 \mathcal{F}_3 \mathcal{F}_4 + 563760\mathcal{G}_1^6 \mathcal{F}_6 + 5376\mathcal{G}_1^3 \mathcal{F}_3^3  - 484380\mathcal{G}_1^4 \mathcal{F}_4^2  + 6552\mathcal{G}_1^2 \mathcal{F}_3^2 \mathcal{F}_4 \\ - 172800\mathcal{G}_1^3 \mathcal{F}_3 \mathcal{F}_6 - 40\mathcal{F}_3^4  - 5832\mathcal{G}_1 \mathcal{F}_3 \mathcal{F}_4^2  + 81000\mathcal{G}_1^2 \mathcal{F}_4 \mathcal{F}_6 + 729\mathcal{F}_4^3  + 720\mathcal{F}_3^2 \mathcal{F}_6 - 3240\mathcal{F}_6^2).
\end{multline*}
Holweck and Oeding give another expression for this hyperdeterminant using a different set of generators \cite{HolweckOedingE8}.

\section{Stationary points of symmetric invariants}\label{sec:compute}
\subsection{Systems of equations} Let $g:\RR^k\to\RR$ be a homogeneous polynomial. Consider the problem of finding nonvanishing stationary points $\varphi\in\SS^{k-1}$ of the restriction of $g$ to $\SS^{k-1}$. By the method of Lagrange multipliers, $\varphi$ is a stationary point of $g$ if and only if
\begin{equation}\label{eq:lagrange}
\nabla g (\varphi)= \lambda (x_1,x_2,\dots,x_{k})\Big|_{x=\varphi}
\end{equation}
for some $\lambda\in\RR$. Indeed, the condition above says that $\nabla g(\varphi)\in (T_\varphi \SS^{k-1})^\perp$. This is equivalent to $\langle \nabla g(\varphi), u\rangle=0$ for every $u\in T_\varphi \SS^{k-1}$, which is equivalent to $\frac{d}{dt}\big|_{t=0} g(\gamma(t))=0$ for every smooth $\gamma:\RR\to\SS^{k-1}$ such that $\gamma(0)=\varphi$.

We can turn \eqref{eq:lagrange} into a system of equations by taking all $2\times 2$ minors of the $2\times k$ matrix obtained from the vectors that appear in the equation, since the rows of the matrix are proportional if and only if the matrix is rank 1. It is enough to take $k-1$ of these matrix minors, reasoning as follows. For $j=1,\dots,k$ let $U_j\subset\SS^{k-1}$ be the dense open set consisting of $\varphi\in\SS^{k-1}$ such that $x_j(\varphi)\neq 0$. Then $\varphi\in U_j$ is stationary if and only if it solves
\begin{equation}\label{eq:sys}
    \frac{\partial g}{\partial x_i}x_{j} - \frac{\partial g}{\partial x_{j}}x_i=0,\quad i\neq j.
\end{equation}
Since $g$ is a homogeneous polynomial, the the system \eqref{eq:sys} consists of homogeneous polynomials and its solution set in the ambient space $\RR^k$ is a union of lines. Any point on a line can easily be normalized to get a corresponding point on $\SS^{k-1}$. Therefore, to find nonvanishing stationary points $\varphi\in U_j$, we may  set $x_j=1$ and solve the dehomogenized system
\begin{equation}\label{eq:dehom}
\left(\frac{\partial g}{\partial x_i}- \frac{\partial g}{\partial x_{j}}x_i\right)\Big|_{x_j=1}=0,\quad i\neq j.
\end{equation}
We are particularly interested in the case where $g:\CC^r\to\RR$ has the form $g(z)=|f(z)|^2$, where $f\in\CC[z_1,\dots,z_r]$ is a complex homogeneous polynomial of degree $m>0$. Note that $f(e^{\im t}z)=e^{\im mt}f(z)$, hence $g$ is invariant under multiplication by elements of the circle group $\mathbb{T}$. By \Cref{prop:sgroup}, the solutions of \eqref{eq:sys} in $\SS^{2r-1}\subset\CC^r$ is a union of $\mathbb{T}$-orbits, hence the variety has no 0-dimensional components. To fix this issue, we add the constraint $f(\varphi)\in\RR$. In this case, it suffices to solve the system \eqref{eq:sys} or \eqref{eq:dehom} substituting $g$ with the real part of the complex polynomial $f$. The following proposition explains.

\begin{proposition}\label{prop:realpart}
    Let $f=f_1+\im f_2$ be a polynomial in $\CC[z_1,\dots,z_r]$ with real and imaginary parts $f_1$ and $f_2$ respectively. Suppose $\varphi\in\SS^{2r-1}\subset\CC^{r}$ such that $f(\varphi)$ is real. Then $\varphi$ is a nonvanishing stationary point of $|f(z)|^2$ restricted to $\SS^{2r-1}$ if and only if $\varphi$ is a nonvanishing stationary point of $f_1$ restricted to $\SS^{2r-1}$.
\end{proposition}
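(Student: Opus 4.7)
The plan is to reduce the statement to a one-line gradient computation combined with the Lagrange-multiplier characterization of stationary points on the sphere. Viewing $f_1$ and $f_2$ as real polynomials in the coordinates $(x_1,y_1,\dots,x_r,y_r)$ on $\RR^{2r}\cong\CC^r$, I would first write
\[
|f|^2=f_1^2+f_2^2,\qquad \nabla(|f|^2)=2f_1\nabla f_1+2f_2\nabla f_2.
\]
At a point $\varphi$ with $f(\varphi)\in\RR$, the hypothesis forces $f_2(\varphi)=0$, so the second summand drops out and
\[
\nabla(|f|^2)(\varphi)=2f_1(\varphi)\,\nabla f_1(\varphi).
\]
Since $f_2(\varphi)=0$, the hypothesis $|f(\varphi)|^2\neq 0$ is equivalent to $f_1(\varphi)\neq 0$, so the two ``nonvanishing'' conditions agree.

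Next I would invoke the Lagrange-multiplier criterion recalled in \eqref{eq:lagrange}: a point $\varphi\in\SS^{2r-1}$ is a stationary point of a smooth function $g$ on the sphere precisely when $\nabla g(\varphi)$ is a real scalar multiple of $\varphi$. Applying this to $g=|f|^2$ together with the displayed identity, $\varphi$ is stationary for $|f|^2|_{\SS^{2r-1}}$ iff $2f_1(\varphi)\nabla f_1(\varphi)=\lambda\varphi$ for some $\lambda\in\RR$. Because $f_1(\varphi)\neq 0$, this is equivalent to $\nabla f_1(\varphi)\in\RR\cdot\varphi$, which is exactly the Lagrange condition for $\varphi$ to be stationary for $f_1|_{\SS^{2r-1}}$. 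This chain of equivalences gives both directions of the proposition simultaneously.

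There is no serious obstacle here; the entire proof is a gradient computation plus the standard normal-bundle characterization of sphere-stationarity. The only point that warrants care is that the two hypotheses are used in tandem and in a specific order: the reality of $f(\varphi)$ is what kills the $f_2\nabla f_2$ contribution, and the nonvanishing of $|f(\varphi)|$ is what allows the nonzero scalar $2f_1(\varphi)$ to be cancelled from the Lagrange equation without losing information. If either hypothesis is dropped, the equivalence fails in the obvious way, so it is worth noting explicitly where each is used.
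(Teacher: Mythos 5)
Your proof is correct and follows essentially the same route as the paper: the heart of both arguments is that $f(\varphi)\in\RR$ forces $f_2(\varphi)=0$, so $\nabla(|f|^2)$ collapses to $2f_1(\varphi)\nabla f_1$ at $\varphi$, and the nonvanishing hypothesis lets you cancel the scalar. The only cosmetic difference is that the paper carries out this computation in a chart on $\SS^{2r-1}$ while you work with the ambient gradient and the Lagrange-multiplier characterization from \eqref{eq:lagrange}; both are standard and equivalent.
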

\begin{proof}
    Choose a chart $\alpha:U\to\RR^{2r-1}$ on an open set $U\subset\SS^{2r-1}$ centered at $\varphi$. Let $F:\RR^{2r-1}\to \RR$ be the composition \[F(x)=|f(\alpha^{-1}(x))|^2=f_1(\alpha^{-1}(x))^2+f_2(\alpha^{-1}(x))^2.\]
    Since $f(\varphi)$ is real, $f_2(\varphi)=0$. Hence
    \begin{align*}
    \frac{\partial F}{\partial x_i}(0) &=2f_1(\varphi)\frac{\partial}{\partial x_i}(f_1\circ\alpha^{-1})(0)+2f_2(\varphi)\frac{\partial}{\partial x_i}(f_2\circ\alpha^{-1})(0) \\
    &= 2f_1(\varphi)\frac{\partial}{\partial x_i}(f_1\circ \alpha^{-1})(0).
    \end{align*}
    Then the gradient at $\varphi$ is
    $
        \nabla F(0)
        = 2f_1(\varphi)\nabla (f_1\circ\alpha^{-1})(0).
    $
    If $f_1(\varphi)\neq 0$, then $\nabla F(0)=0$ if and only if $\nabla (f_1\circ\alpha^{-1})(0)=0$.
\end{proof}

\subsection{Four-qubit stationary points} We return to the four-qubit SLOCC module $\mathcal{H}=\mathcal{H}_4$ and continue using the notation and definitions introduced in \Cref{sec:vinberg}. We also define $\SS^{15}\subset\mathcal{H}$ to be the set of unit vectors in $\mathcal{H}$, $\SS^{7}:=\SS^{15}\cap \mathfrak{a}$ to be the set of unit vectors in the Cartan subspace, and $\mathbb{T}$ to be the circle group. Let $f\in\CC[\mathcal{H}]$ be a homogeneous symmetric SLOCC invariant polynomial of degree $m>0$, and define $F:\mathcal{H}\to\RR$ by $F(z)=|f(z)|^2$. Our goal is to find the nonvanishing stationary points of $F|_{\SS^{15}}$.
\subsubsection{Reducing to the Cartan subspace} Let $S\subset\CC[\mathcal{H}]$ denote the set of homogeneous symmetric SLOCC invariant polynomials. Define
\begin{equation}\label{eq:G}
\tilde{G}:=\{\text{smooth }\Phi:\SS^{15}\to\SS^{15}\:|\: \text{$|f\circ\Phi|=|f|$ for all $f\in S$}\}.
\end{equation}
Recall, by \Cref{prop:sgroup}, that $\tilde{G}$ preserves the stationary points of $F|_{\SS^{15}}$. By \Cref{prop:crit}, every nonvanishing stationary point $\varphi\in\SS^{15}$ is critical. By \Cref{prop:critorbit}, the set of critical points in $\SS^{15}$ is the $\SU_2^{\times 4}$-orbit of $\SS^{7}$. Since $\rho(\SU_2^{\times 4})\subset \tilde{G}$, where $\rho:\GL_2^{\times 4}\to\GL(\mathcal{H})$ is the natural representation  defined in \eqref{eq:rep}, we can find every nonvanishing critical point of $F$ on $\SS^{15}$ up to $\tilde{G}$-equivalence by searching over points in $\SS^7$.

\subsubsection{Symmetries of stationary points}\label{sec:wsym} Now define
\[
\tilde{W}:=\{\text{smooth }\Phi:\SS^{7}\to\SS^{7}\:|\: \text{$|f\circ\Phi|=|f|$ for all $f\in S$}\}.
\]
By \Cref{prop:sgroup}, $\tilde{W}$ preserves the stationary points of $F|_{\SS^{7}}$. Since the Weyl group arises from restricting the $\SL_2^{\times 4}$-action, we have $\rho'(W)\subset\tilde{W}$, where $\rho'$ is the homomorphism induced by $\rho$. 
Since $f$ is symmetric, $\pi(\mathfrak{S}_4)\subset \tilde{W}$, where $\pi:\mathfrak{S}_4\to\GL(\mathfrak{a})$ is the representation given explicitly in \eqref{eq:sym4}. Finally, we note that entrywise complex conjugation $\varphi\mapsto\varphi^*$ is also an element of $\tilde{W}$.

\subsubsection{Stationary points of $|\mathcal{F}_1|$} The invariant $\mathcal{F}_1=6\mathcal{G}_1$ is simple enough that we can find all of its stationary points of $|\mathcal{F}_1|$ analytically. The following proposition summarizes the situation.

\begin{proposition}
    The state $\varphi\in \SS^{15}\subset\mathcal{H}$ is a global maximizer of the 4-tangle $|\mathcal{F}_1|^2$ restricted to $\SS^{15}$ if and only if $\varphi$ is in the local unitary orbit of the set $\SS^{7}_\RR\subset\SS^7$ of real unit vectors in the Cartan subspace. There are no other nonvanishing stationary points.
\end{proposition}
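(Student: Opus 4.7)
The plan is to reduce the problem to a short Lagrange-multiplier calculation on the Cartan subspace. First I would apply \Cref{prop:crit} to see that every nonvanishing stationary point of $|\mathcal{F}_1|^2|_{\SS^{15}}$ is critical, and then \Cref{prop:critorbit} to conclude that any such point is $\SU_2^{\times 4}$-equivalent to one lying in $\SS^{7}\subset\mathfrak{a}$. Since $\U_2^{\times 4}\cdot\varphi$ equals $\SU_2^{\times 4}\cdot\mathbb{T}\cdot\varphi$ on $\mathcal{H}$ (the center of $\U_2^{\times 4}$ acts as the diagonal global phase), it is enough to identify the nonvanishing stationary points of $|\mathcal{F}_1|^2$ on $\SS^{7}$ up to $\mathbb{T}$. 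On $\mathfrak{a}$ one has $\mathcal{F}_1|_{\mathfrak{a}}(z)=6\,p(z)$ with $p(z):=z_1^2+z_2^2+z_3^2+z_4^2$, so the remaining question is to study $|p|^2$ on the unit sphere of $\CC^4$.

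Next, the global maximum and its maximizers fall out of the triangle inequality: $|p(z)|\leq\sum_j |z_j|^2=1$ on $\SS^{7}$, with equality iff all $z_j^2$ are nonnegative real multiples of a common unit complex number, equivalently $z=e^{\im\theta/2}\psi$ for some $\psi\in\SS^{7}_\RR$. Every real unit vector $\psi$ attains $p(\psi)=1$, so the global maximum of $|\mathcal{F}_1|^2$ on $\SS^{15}$ equals $36$ and is achieved exactly on $\U_2^{\times 4}\cdot\SS^{7}_\RR$. This already proves the backward direction, since global maximizers are automatically stationary.

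For the ``no other stationary points'' direction I would first use the $\mathbb{T}$-action to normalize $p(\varphi)\in\RR$; this is always possible because $p(e^{\im\theta}\varphi)=e^{2\im\theta}p(\varphi)$ sweeps through every argument. Writing $z_j=x_j+\im y_j$ and $A(x,y):=\operatorname{Re}p(z)=\sum_j(x_j^2-y_j^2)$, \Cref{prop:realpart} reduces the problem to finding nonvanishing stationary points of $A$ on the real sphere $\sum_j(x_j^2+y_j^2)=1$ in $\RR^{8}$. The Lagrange system collapses to $(1-\lambda)x_j=0$ and $(1+\lambda)y_j=0$ for every $j$, so either $\lambda=1$ (forcing $\varphi\in\SS^{7}_\RR$) or $\lambda=-1$ (forcing $\varphi\in\im\,\SS^{7}_\RR$); any other $\lambda$ kills $\varphi$, which is excluded. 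The purely imaginary case lies in $\mathbb{T}\cdot\SS^{7}_\RR$ via the global phase $e^{\im\pi/2}$, so in every case $\varphi\in\U_2^{\times 4}\cdot\SS^{7}_\RR$.

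The main obstacle is pure bookkeeping: one must check that the chain of reductions (Cartan, $\mathbb{T}$-normalization, \Cref{prop:realpart}) really composes to recover the full LU orbit of $\SS^{7}_\RR$ at the end, for which the observation $\U_2^{\times 4}\cdot\SS^{7}_\RR=\SU_2^{\times 4}\cdot\mathbb{T}\cdot\SS^{7}_\RR$ is the key. Once the problem has been landed on $\mathfrak{a}$ the algebra is trivial because $p$ is a sum of squares; this clean structure is precisely why $\mathcal{F}_1$ admits a closed-form description of its stationary set, whereas the higher-degree invariants $\mathcal{F}_3,\mathcal{F}_4,\mathcal{F}_6$ treated later in \Cref{sec:compute} must be attacked numerically.
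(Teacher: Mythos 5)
Your proposal is correct and follows essentially the same route as the paper: reduce to the Cartan subspace via \Cref{prop:crit} and \Cref{prop:critorbit}, normalize by the $\mathbb{T}$-action so that \Cref{prop:realpart} applies to $\operatorname{Re}\mathcal{F}_1|_{\mathfrak{a}}$, conclude that the stationary points on $\SS^7$ are purely real or purely imaginary, and reassemble the local unitary orbit via $\U_2^{\times 4}=\mathbb{T}\cdot\SU_2^{\times 4}$. The only cosmetic differences are that you solve the Lagrange system directly (getting $\lambda=\pm1$) where the paper computes the ideal of the dehomogenized minors, and you add a triangle-inequality identification of the maximizers where the paper instead deduces the maximum value from constancy of $|\mathcal{F}_1|$ on $\U_2^{\times 4}\SS^7_\RR$ plus compactness.
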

\begin{proof} 
Let $F:\mathcal{H}\to\RR$ be the map $z\mapsto |\mathcal{F}_1(z)|$ and define $\tilde{G}$ as in \eqref{eq:G}. Make the identification $\mathfrak{a}=\CC^4=\RR^8$ with the embedding \eqref{eq:embed}. The real part of $\mathcal{F}_1|_{\mathfrak{a}}$ is
\begin{equation}\label{eq:gf1}
\text{Re}(\mathcal{F}_1|_{\mathfrak{a}}(z))=6(x_1^2 - y_1^2 + x_2^2 - y_2^2 + x_3^2 - y_3^2 + x_4^2 - y_4^2).
\end{equation}
The ideal generated by the dehomogenized polynomials \eqref{eq:dehom} with $g$ set equal to the right side of \eqref{eq:gf1} is $\langle y_1,y_2,y_3,y_4\rangle$. By \Cref{prop:realpart}, if $\varphi\in\SS^7$ is a nonvanishing stationary point of $F|_{\SS^7}$ such that $\mathcal{F}_1(\varphi)$ is real and $x_1(\varphi)\neq 0$, then $\varphi$ is real. Repeating the calculation with different orderings on the variables, we conclude that if $\varphi\in\SS^7$ is a nonvanishing stationary point of $F|_{\SS^7}$ such that $\mathcal{F}_1(\varphi)$ is real, then $\varphi$ is purely real or purely imaginary. Given any $\psi\in\SS^7$ there exists $\varphi$ in the $\mathbb{T}$-orbit of $\psi$ such that $\mathcal{F}_1(\varphi)$ is real. Since $\psi\mapsto e^{\im t}\psi$ is an element of $\tilde{G}$, we showed that
\begin{equation}\label{eq:1}
X\cap\SS^7\subset\mathbb{T}(\SS^7_\RR\cup\im\SS^7_\RR)=\mathbb{T}\SS^7_\RR,
\end{equation}
where $X\subset\SS^{15}$ is the set of nonvanishing stationary points of $F|_{\SS^{15}}$. By \Cref{prop:crit,prop:critorbit},
\begin{equation}\label{eq:2}
X\subset(\SU_2^{\times 4}\mathfrak{a})\cap\SS^{15}=\SU_2^{\times 4}\SS^7.
\end{equation}
Since $\SU_2^{\times 4}$ acts on $\SS^{15}$ like elements of $\tilde{G}$, $X$ is a union of $\SU_2^{\times 4}$-orbits. By \eqref{eq:2}, each $\SU_2^{\times 4}$-orbit intersects $\SS^7$; by \eqref{eq:1}, the intersection consists of elements in $\mathbb{T}\SS^7_\RR$. It follows that
\[
X\subset\SU_2^{\times 4}(\mathbb{T}\SS^7_\RR)=\U_2^{\times 4}\SS^7_\RR.
\]
Since $\mathcal{F}_1(\varphi)=6\|\varphi\|^2=6$ whenever $\varphi\in\SS^7_\RR$, we have $F(\varphi)=6$ whenever $\varphi\in \U_2^{\times 4}\SS^7_\RR$. By compactness, $F|_{\SS^{15}}$ has a global maximizer, which is a nonvanishing stationary point. Therefore, the global maximum is 6 and $X=\U_2^{\times 4}\SS^7_\RR$ is the set of global maximizers.
\end{proof}

\begin{table}[t]
\renewcommand{\arraystretch}{1.2}
    \begin{tabular}{l|l}
    $\varphi_1=(1,0,0,0)\cong \ket{MP}$ &
    $\varphi_2=(1,1,0,0)\cong\ket{GHZ}$\\
    $\varphi_3=(1,1,1,0)$ &
    $\varphi_4=(2,1,1,0)$ \\
    $\varphi_5=(\sqrt{2},\im,0,0)$ &
    $\varphi_6=(\sqrt{2},\im,\im,0)$\\
    $\varphi_7=(\sqrt{2},\im,\im,\im)$& 
    $\varphi_8=(\sqrt{3},\im,\im,\im)\cong\ket{HS}$\\
    $\varphi_9=(1,e^{\pi\im/3},e^{2\pi\im/3},0)\cong\ket{HD}$&
    $\varphi_{10}=(\sqrt{2},\sqrt{2}-\sqrt{3}+\im,\sqrt{3}-\sqrt{2}+\im,\sqrt{6}-2\im)$\\
    $\varphi_{11}=(7,2\sqrt{7}\im,2\sqrt{7}\im,0)$&
    $\varphi_{12}=(18,11-\sqrt{203}\im,7+\sqrt{203}\im,0)$ 
    \\
\multicolumn{2}{l}{$\varphi_{13}=(1,a+b\im,-a-b\im,0)$, where 
    $a \approx 0.0933$, $b\approx 0.622 $
        satisfy} 
        \\
\multicolumn{2}{c}{
 \begin{tabular}{rl}
$1000a^{2}b^{2}-872b^{4}+85a^{2}+345b^{2}-7$&$=0$, \\
$100a^{4}-244b^{4}+45a^{2}+65b^{2}+11$&$=0$.
\end{tabular}}
    \\ 
\multicolumn{2}{l}{$\varphi_{14}=(1,a-b\im,a+b\im,c\im)$, where 
    $a \approx 0.217$, $b\approx 0.830$, $c \approx 0.366 $ 
    satisfy} \\
\multicolumn{2}{c}{
    \begin{tabular}{rl}
$5a^4 c-30a^2 b^2 c+5b^4 c-5a^2 c^3+5b^2 c^3+2c^5+5a^2 c-5b^2 c-5c^3+2c$ &$=0$, \\
$10a^4b-40a^2b^3+14b^5-15a^2bc^2+5b^3c^2+5bc^4+25a^2b-15b^3-5bc^2+4b$ &$=0$, \\
$4a^5+20a^3b^2-5a^3c^2+15ab^2c^2-5a^3-5ab^2+5ac^2+a $&$=0$.
\end{tabular}}
\end{tabular}
    \caption{Stationary points of $|\mathcal{F}_3|$.}
    \label{tab:statF3}
\end{table}

\subsubsection{Stationary points of $|\mathcal{F}_3|$}\label{sec:F3computation} To find stationary points of $|\mathcal{F}_3|$, we set up a system of dehomogenized equations as in \eqref{eq:dehom} with $g$ equal to the real part of $\mathcal{F}_3$. This system consists of seven polynomial equations (inhomogeneous) of degree 6.  We solved the system using the Homotopy Continuation library in Julia \cite{hc.jl}, which tracks 131,505 paths (the mixed volume) and finds (after about 15 minutes of computation) 9,471 nonsingular real points, which are highly accurate approximations of solutions. From these approximations we hypothesized a real affine subspace in which the exact solutions lie. Then we performed symbolic computations with Macaulay2 \cite{M2} to find expressions or polynomial relations for these solutions, verifying that they are stationary on $\SS^7$. This method has an advantage over just attempting to decompose the original ideal since the additional constraints simplify the polynomials and make symbolic methods tractable.

After reducing by the action of the group $\tilde{W}$ (see \Cref{sec:wsym}), we find 14 distinct equivalence classes listed in \Cref{tab:statF3} with representatives $\varphi_i$ for $1\leq i\leq 14$.  For ease of notation, we do not normalize the vectors. We choose representatives which have a real first coordinate; this can always be done by the action of $\mathbb{T}$. The $\varphi_i$ are in distinct $\tilde{G}$-orbits because they have distinct images under the map \[\varphi_i\mapsto (|\mathcal{F}_1(\varphi_i)|,|\mathcal{F}_3(\varphi_i)|,|\mathcal{F}_4(\varphi_i)|,|\mathcal{F}_6(\varphi_i)|).\] Using the polynomials in \Cref{sec:sym}, we check numerically that these points are stationary for $\|\mathcal{F}_3\|$ restricted to $\mathbb{S}^{15}$ and not just restricted to $\mathbb{S}^{7}$. Note that $\varphi_1$, $\varphi_2$, $\varphi_8$, and $\varphi_9$ are $\tilde{G}$-equivalent to states that appear in \Cref{sec:selection}.

\begin{table}[t]
\renewcommand{\arraystretch}{1.2}
    \begin{tabular}{l|l}
    $\psi_1=(1,0,0,0)\cong\ket{MP}$ &
    $\psi_2=(1,1,0,0)\cong\ket{GHZ}$\\
    $\psi_3=(1,1,1,0)$ &
    $\psi_4=(2,1,1,0)$ \\
    $\psi_5=(1,\im,0,0)\cong\ket{C_1}$ &
    $\psi_6=(1,\im,e^{-\pi \im/4},e^{\pi \im/4})\cong\ket{BSSB}$\\
    $\psi_7=(\sqrt{33},\sqrt{33},\sqrt{13}-\sqrt{20}\im,\sqrt{13}-\sqrt{20}\im)$\\
    \multicolumn{2}{l}{$\psi_8=(1,a \im,a \im,0)$, where $a\approx 0.393$
    satisfies} \\
    \multicolumn{2}{c}{
    $80a^6-91a^4+77a^2-10=0.$
    }
    \\
    \multicolumn{2}{l}{$\psi_9=(1,a \im,a \im,a \im)$, where $a\approx 0.342$
    satisfies} \\
    \multicolumn{2}{c}{
    $75a^6-63a^4+49a^2-5=0$.
    }
    \\
    \multicolumn{2}{l}{$\psi_{10}=(1,\frac{1}{2}+a \im,\frac{1}{2}-a \im,0)$, where 
    $a \approx 0.518$
    satisfies} \\
    \multicolumn{2}{c}{
    $640a^6-1232a^4+2016a^2-465=0$.
    }\\
    $\psi_{11}=(1,a \im,b,0)$, where 
    $a \approx 0.920$, $b \approx 0.302$
    satisfy  \\
\multicolumn{2}{c}{
 \begin{tabular}{rl}
$35a^4 - 21a^2 b^2 - 4b^4 - 21a^2 - 18b^2 - 4$ &$= 0$, \\
$190a^2b^4 - 110b^6 - 237a^2b^2 - 339b^4 + 190a^2 - 339b^2 - 110 $&$=0 $.  
    \end{tabular}}\\
    \multicolumn{2}{l}{ $\psi_{12}=(1,a \im,b,b)$ where 
    $a\approx 0.879$, $ b\approx 0.256$
    satisfy} \\
\multicolumn{2}{c}{
 \begin{tabular}{rl}
        $35a^4 - 21a^2b^2 + 52b^4 - 21a^2 + 3b^2 - 4 $&$= 0$, \\
$587a^2b^4 + 446b^6 - 918a^2b^2 + 129b^4 + 475a^2 - 732b^2 - 275 $&$=0$.
    \end{tabular}}\\
    \multicolumn{2}{l}{$\psi_{13}=(1,a+b \im,a+b \im,0)$, where $a\approx 0.347$, $b\approx 0.716$ 
    satisfy} \\
 \multicolumn{2}{c}{
 \begin{tabular}{rl}
$2696a^{6}-24472a^{2}b^{4}+7792b^{6}-161a^{4}-2030a^{2}b^{2} -4221b^{4}+679a^{2}+1659b^{2}+26$ &$=0$,\\
$94360a^{4}b^{2}-213584a^{2}b^{4}+51096b^{6}+3437a^{4}+2310a^{2}b^{2} -29687b^{4}+1505a^{2}$\\$+13405b^{2}-262$ &$=0$.
\end{tabular}}
\end{tabular}
    \caption{Stationary points of $|\mathcal{F}_4|$.}
    \label{tab:statF4}
\end{table}

\subsubsection{Stationary points of $|\mathcal{F}_4|$}\label{sec:F4computation} Repeating the process of \Cref{sec:F3computation}, we obtain a system of 7 polynomial equations (inhomogeneous) of degree 8 in 7 variables. We experimented with a variety of methods to solve this system, such as primary decomposition in symbolic algebra software \cite{M2}, Newton's method for many initial points, and numerical irreducible decomposition or solving with regeneration in Bertini \cite{BertiniSoftware}. Ultimately, the method that worked was to perform a standard solve command with HomotopyContinuation.jl \cite{hc.jl} on a 2020 Mac (3.3 GHz 6-Core Intel Core i5 with 72GB of RAM). This computation automatically used a polyhedral start system, with the mixed volume $1,367,387$, rather than the naive Bézout bound $8^7 =2,097,152$. After approximately 7 hours of compute time the result was a set of $936,047$ nonsingular solutions, $10,963$ of them real.

We are grateful to Jon Hauenstein who was able to run our code in Bertini in parallel on his cluster which found a result (in approximately one day) of 10,971 real solutions.

\begin{remark}The discrepancy in the number of solutions from different numerical computations applied to the same system is not uncommon and illustrates the fact that solutions can be lost with large computations. We cannot be sure if a numerical computation produces a complete set of solutions. However, this is not a huge concern for us since we only need to find one representative from each $\tilde{W}$-orbit. Moreover, if a solution is missing, then all solutions in its $\tilde{W}$-orbit must have been missed. For our computations of stationary points of $|\mathcal{F}_4|$ both sets of numerical solutions produce the same set of representatives.
\end{remark}
Like in the previous case, we used the numerical results to hypothesize the exact forms of each representative, then symbolically verified that the forms we report are stationary points on $\mathbb{S}^7$. We check numerically that these points are stationary on $\SS^{15}$. This produced 13 stationary points of $|\mathcal{F}_4|$ which belong to distinct $\tilde{G}$-orbits, listed in \Cref{tab:statF4} as $\psi_i$ for $1\leq i\leq 13$. As before, we do not normalize the vectors. Note that the first four states $\psi_1,\dots,\psi_4$ in \Cref{tab:statF4} are the same as the first four states $\varphi_1,\dots,\varphi_4$ in \Cref{tab:statF3}. Furthermore, $\psi_5$ and $\psi_6$ are $\tilde{G}$-equivalent to states that appear in \Cref{sec:selection}.

\begin{table}[b]
    \centering
    \begin{tabular}{c|c|c|c|c|c|c|c}
         & $|\mathcal{F}_1|$ & $|\mathcal{F}_3|$ & $|\mathcal{F}_4|$ & $|\mathcal{F}_6|$ & $\textbf{H}(|\mathcal{F}_3|)$ & $\textbf{H}(|\mathcal{F}_4|)$ & $\textbf{H}(|\mathcal{F}_6|)$ \\
        \hline
        $\varphi_1^*$ & 6 & 6 & 6 & 6 & (3,1,3) & (3,1,3) & (3,1,3) \\
        $\varphi_2^*$ & 6 & 9 & 16.5 & 64.125 & (6,1,0) & (6,1,0) & (6,1,0) \\
        $\varphi_3$ & 6 & $7.\overline{3}$ & $9.\overline{5}$ & $\sim$16.9 & (4,1,2) & (4,1,2) & (4,1,2) \\
        $\varphi_4$ & 6 & $7.\overline{6}$ & $10.7\overline{2}$ & $\sim$23.0 & (5,1,1) & (5,1,1) & (5,1,1) \\
        $\varphi_5$ & 2 & $0.\overline{6}$ & $1.\overline{259}$ & $1.\overline{259}$ & (3,1,3) & n/a & n/a  \\
        $\varphi_6$ & 0 & 2.25 & 1.40625 & $\sim$1.35 & (4,1,2) & n/a & n/a  \\
        $\varphi_7$ & 1.2 & 2.64 & 1.392 & 0.912768 & (5,1,1) & n/a & n/a  \\
        $\varphi_8^*$ & 0 & $2.\overline{6}$ & 0 & $\sim$0.790 & (6,1,0) & n/a  & n/a  \\
        $\varphi_9^*$ & 0 & $2.\overline{6}$ & 0 & $\sim$3.01 & (6,1,0) & n/a & n/a \\
        $\varphi_{10}$ & 0 & $\sim$2.45 & 0 & $1.6041\overline{6}$ & (5,1,1) & n/a & n/a \\
        $\varphi_{11}$ & 0.4 & $2.24\overline{8}$ & $\sim$1.93 & $\sim$1.55 & (3,1,3) & n/a & n/a \\
        $\varphi_{12}$ & 0.96 & 2.6496 & 1.8223104 & $\sim$3.23 & (5,1,1) & n/a & n/a\\
        $\varphi_{13}$ & $\sim$1.13 & $\sim$2.23 & $\sim$1.20 & $\sim$1.23 & (3,1,3) & n/a & n/a \\
        $\varphi_{14}$ & $\sim$0.963 & $\sim$2.43 & $\sim$1.41 & $\sim$1.74 & (4,1,2) & n/a & n/a \\
        $\psi_{5}^*$ & 0 & 0 & 2.5 & 1.875 & n/a & (5,1,1) & (5,1,1) \\
        $\psi_{6}^*$ & 0 & 0 & 1.875 & 0 & n/a & (3,1,3) & n/a \\
        $\psi_{7}$ & $\sim$3.77 & $\sim$0.342 & $0.\overline{06}$ & $\sim$2.36 & n/a & (3,1,3) & n/a \\
        $\psi_{8}$ & $\sim$3.17 & $\sim$0.932 & $\sim$1.66 & $\sim$0.00699 & n/a & (3,1,3) & n/a \\
        $\psi_{9}$ & $\sim$2.88 & $\sim$1.47 & $\sim$2.41 & $\sim$1.30 & n/a & (3,1,3) & n/a \\
        $\psi_{10}$ & $\sim$2.84 & $\sim$1.34 & $\sim$3.87 & $\sim$4.65 & n/a & (5,1,1) & n/a \\
        $\psi_{11}$ & $\sim$0.758 & $\sim$0.440 & $\sim$2.27 & $\sim$0.834 & n/a & (4,1,2) & n/a \\
        $\psi_{12}$ & $\sim$1.13 & $\sim$0.624 & $\sim$2.17 & $\sim$0.311 & n/a & (3,1,3) & n/a \\
        $\psi_{13}$ & $\sim$2.69 & $\sim$1.33 & $\sim$3.01 & $\sim$3.46 & n/a & (4,1,2) & n/a \\
    \end{tabular}
    \medskip
    \caption{More four-qubit stationary points (see \Cref{sec:overview}).}\label{tab:moreStatPts}
\end{table}

\subsubsection{Overview of stationary points}\label{sec:overview}
Given a point $\varphi\in\SS^{7}$, choose a chart $\alpha:U\to\RR^{7}$ on an open set $U$ containing $\varphi$ defined by \[\alpha^{-1}(x_2,\dots,x_8)=((1-\sum_{i=2}^8 x_i^2)^{1/2},x_2,\dots,x_8).\]
Given smooth $g:\SS^{7}\to\RR$, define $\textbf{H}(g)$ to be the Hessian matrix of the function $g\circ\alpha^{-1}:\RR^{7}\to\RR$.

In \Cref{tab:moreStatPts} we list the values attained on $|\mathcal{F}_1|$, $|\mathcal{F}_3|$, $|\mathcal{F}_4|$, and $|\mathcal{F}_6|$ for each of the points introduced in \Cref{sec:F3computation,sec:F4computation}. If $\varphi\in\SS^7$ is a nonvanishing stationary point of $|\mathcal{F}_i|$, we also record the 3-tuple $(\lambda^{-},\lambda^0,\lambda^+)$, where $\lambda^-$, $\lambda^0$, and $\lambda^+$ is the number of negative, zero, and positive eigenvalues respectively of $\textbf{H}(|\mathcal{F}_i|)$ evaluated at $\varphi$. We write n/a to indicate that a point is not stationary for a function. All Hessian matrices that we calculate have one zero eigenvalue; this is not surprising because $|\mathcal{F}_i|$ is fixed by the action of the one-dimensional real Lie group $\mathbb{T}$. An asterisk indicates that an item also appears in \Cref{table}.

\section{Sparse quantum error correcting codes}\label{sec:code}

\subsection{Uniform states and pure codes} A point $\ket{\varphi}\in (\CC^2)^{\otimes n}$ is called \textit{$r$-uniform} if every $r$-qubit reduced density matrix of $\ket{\varphi}\bra{\varphi}$ is proportional to the identity (we write $\bra{\varphi}$ for the conjugate transpose of the vector $\ket{\varphi}$). This extends the notion of critical points since a point is critical if and only if it is $1$-uniform. Note that $r\leq \lfloor n/2 \rfloor$ whenever $r$-uniform states exist in $(\CC^2)^{\otimes n}$. An $r$-uniform state such that $r=\lfloor n/2 \rfloor$ is called \textit{absolutely maximally entangled} (AME). 
These AME states have applications in quantum secret sharing \cite{PhysRevA.86.052335}, open-destination teleportation \cite{Helwig:2013qoq}, and quantum error correction \cites{Huber2020,681316}. There has been much work on the existence and constructions of AME states and, more generally, $r$-uniform states \cites{PhysRevLett.118.200502,PhysRevA.90.022316,ZangTianFeiZuo}. 
In \cite{Borras_2007}, Borras et al.~were able to obtain five-qubit and six-qubit AME states by maximizing various entanglement measures with a numerical search algorithm. Using the states found in \Cref{tab:moreStatPts}, we show more broadly that stationary points of entanglement measures can be used to obtain AME states. In particular, we obtain five-qubit and six-qubit AME states by reversing a construction of Rains (\Cref{prop:rains}). Note that four-qubit AME states do not exist \cite{HIGUCHI2000213}.

\begin{definition}
    Let $\mathcal{C}$ be a $K$-dimensional subspace of $(\CC^2)^{\otimes n}$. We say that $\mathcal{C}$ is a \textit{pure quantum error correcting code} (or simply \textit{pure code}) with parameters $((n,K,d))$ if every point in $\mathcal{C}$ is $(d-1)$-uniform.
\end{definition}

Note that, although the definition of a pure quantum error correcting code given above is not standard, it is known to be equivalent to the standard definition \cite[Observation 1]{Huber2020}. The parameter $d$ is the \textit{distance} of the code and represents the amount of error the code can correct: a code with distance $d\geq 2t+1$ can correct errors that affect up to $t$ subsystems.

Let $V=V_1\otimes V_2\otimes\dots\otimes V_n$ where $V_i\cong \CC^2$ for all $1\leq i\leq n$. If $S$ is a subset of $\{1,2,\dots,n\}$ and $A\in\text{End}(V)$, then $\text{Tr}_S(A)$ is the partial trace of the operator $A$ tracing out the subsystems corresponding to the indices in $S$.

\begin{proposition}[Rains]\label{prop:rains}
    Let $\{\ket{\varphi_i}\}$ be an orthogonal basis of a pure code with parameters $((n,K,d))$. The image of the operator
    $
    \text{Tr}_{\{1\}}(\ket{\varphi_1}\bra{\varphi_1}+\dots + \ket{\varphi_{K}}\bra{\varphi_{K}})
    $
    is a pure code with parameters $((n-1,2K,d-1))$.
\end{proposition}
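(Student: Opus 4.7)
The plan is to split each basis vector along the first qubit as $\ket{\varphi_i} = \ket{0}\otimes\ket{\alpha_i} + \ket{1}\otimes\ket{\beta_i}$, upgrade the pure code hypothesis to Knill--Laflamme-style bilinear identities via polarization, and then read off both the dimension of the image of $Q := \operatorname{Tr}_{\{1\}}(P)$ and its $(d-2)$-uniformity from two well-chosen subsets $S$.

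A direct computation gives $Q = \sum_{i=1}^K(\ket{\alpha_i}\bra{\alpha_i} + \ket{\beta_i}\bra{\beta_i})$, so (as $Q\succeq 0$) its image is $\operatorname{span}\{\ket{\alpha_i},\ket{\beta_i}\}_{i=1}^K$. The core technical step is to upgrade the pointwise hypothesis (every unit vector $\ket{\psi} = \sum_i c_i\ket{\varphi_i}\in\mathcal{C}$ is $(d-1)$-uniform) to the sesquilinear identity
\[
\operatorname{Tr}_{S^c}(\ket{\varphi_i}\bra{\varphi_j}) = \delta_{ij}\,\frac{I_S}{2^{d-1}} \qquad \text{for all } i,j \text{ and every } (d-1)\text{-subset } S\subseteq\{1,\dots,n\}.
\]
I would derive this by testing the equation $\sum_{i,j}c_i\bar{c}_j\operatorname{Tr}_{S^c}(\ket{\varphi_i}\bra{\varphi_j}) = I_S/2^{d-1}$ on a single basis vector and on the four combinations $\tfrac{1}{\sqrt{2}}(\ket{\varphi_i}\pm\ket{\varphi_j})$ and $\tfrac{1}{\sqrt{2}}(\ket{\varphi_i}\pm\im\ket{\varphi_j})$ --- the standard polarization trick.

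Specializing to $S=\{1\}$ and matching coefficients of $\ket{a}\bra{b}$ on qubit $1$ in $\ket{\varphi_i}\bra{\varphi_j}$ yields $\langle\alpha_i|\alpha_j\rangle = \langle\beta_i|\beta_j\rangle = \tfrac{1}{2}\delta_{ij}$ and $\langle\alpha_i|\beta_j\rangle = 0$, so $\{\sqrt{2}\ket{\alpha_i},\sqrt{2}\ket{\beta_i}\}_{i=1}^K$ is an orthonormal basis of $\operatorname{image}(Q)$, establishing $\dim\operatorname{image}(Q) = 2K$. Specializing instead to $S=\{1\}\cup T$ for an arbitrary $(d-2)$-subset $T\subseteq\{2,\dots,n\}$ and matching qubit-$1$ coefficients gives, in this orthonormal basis, $\operatorname{Tr}_{T^c}(\ket{\psi_k}\bra{\psi_\ell}) = \delta_{k\ell}\,I_T/2^{d-2}$. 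Reversing the polarization argument, every unit vector in $\operatorname{image}(Q)$ is $(d-2)$-uniform, which is precisely the statement that $\operatorname{image}(Q)$ is a pure code with parameters $((n-1,2K,d-1))$.

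The only real obstacle is the polarization step itself, and running it carefully in both directions (forward to extract the off-diagonal Knill--Laflamme identities, backward to convert the resulting identities on the new basis into $(d-2)$-uniformity of arbitrary unit vectors in the image). Everything after that is bookkeeping with the qubit-$1$ decomposition and does not use any structure beyond the pure code hypothesis.
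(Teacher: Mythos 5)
The paper does not actually prove this proposition; it defers entirely to Rains's Theorem 19, so there is no in-paper argument to compare yours against. Your self-contained proof is correct and is the standard argument one would extract from Rains: the key move is exactly right, namely that the pointwise $(d-1)$-uniformity of every vector in $\mathcal{C}$ polarizes (via $\tfrac{1}{4}\sum_{k=0}^{3}\im^{k}B(\psi+\im^{k}\phi,\psi+\im^{k}\phi)$ applied to the sesquilinear, operator-valued form $B_S(\psi,\phi)=\text{Tr}_{S^c}(\ket{\psi}\bra{\phi})-\langle\phi|\psi\rangle I_S/2^{d-1}$) into the Knill--Laflamme identities $\text{Tr}_{S^c}(\ket{\varphi_i}\bra{\varphi_j})=\delta_{ij}I_S/2^{d-1}$. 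From there, the block decomposition along qubit $1$ correctly yields both the orthogonality relations $\langle\alpha_i|\alpha_j\rangle=\langle\beta_i|\beta_j\rangle=\tfrac12\delta_{ij}$, $\langle\alpha_i|\beta_j\rangle=0$ (hence $\dim\operatorname{image}(Q)=2K$) and, for $S=\{1\}\cup T$, the identities $\text{Tr}_{T^c}(\ket{\psi_k}\bra{\psi_\ell})=\delta_{k\ell}I_T/2^{d-2}$ on the rescaled basis, which reverse-polarize to $(d-2)$-uniformity of every vector in the image. Two minor points to tidy up: the proposition says \emph{orthogonal} rather than orthonormal basis, so strictly your constants acquire factors of $\|\varphi_i\|^2$ (harmless, since the image, the dimension count, and proportionality to the identity are unaffected); and the statement implicitly requires $d\geq 2$ for $S=\{1\}$ to be admissible. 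Your proof has the advantage of making the paper's appeal to Rains self-contained and of exhibiting explicitly where each parameter of $((n-1,2K,d-1))$ comes from.
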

\begin{proof}
    See \cite[Theorem 19]{Rains1996QuantumWE}.
\end{proof}

\begin{table}[t]
    \centering
    \begin{itemize}
\item[1.] The point $\varphi_6$ and a $\tilde{W}$-equivalent version of itself:
\begin{flalign*}
\quad\qquad\qquad\ket{\Phi_{0}}&= (\sqrt{2}+\im)\ket{v_1}+(\sqrt{2}-\im)\ket{v_2}+\im\ket{v_3}+\im\ket{v_4},& \\
\ket{\Phi_{1}}&= -\im\ket{v_1}+\im\ket{v_2}+(\sqrt{2}+\im)\ket{v_3}+(-\sqrt{2}+\im)\ket{v_4}.&
\end{flalign*}
\item[2.] The pair of points $\varphi_8\cong \ket{HS}$ and $\varphi_9\cong\ket{HD}$:
\begin{flalign*}
\quad\qquad\qquad\ket{\Phi_{0}}&=(\sqrt{3}+\im)\ket{v_1}+(\sqrt{3}-\im)\ket{v_2}+2\im\ket{v_3},& \\
\ket{\Phi_{1}}&=-\sqrt{2}e^{\pi \im/3}\ket{v_1}+\sqrt{2}e^{\pi \im/3}\ket{v_2}+\sqrt{2}e^{\pi \im/3}\ket{v_3}+\sqrt{2}(1+e^{-\pi \im/3})\ket{v_4}.&
\end{flalign*}
\item[3.] The point $\ket{\varphi_{10}}$ and a $\tilde{W}$-equivalent version of itself:
\begin{flalign*}
    \quad\qquad\qquad\ket{\Phi_{0}}=&\begin{multlined}[t]
    (2\sqrt{2}-\sqrt{3}+\im)\ket{v_1}+(\sqrt{3}-\im)\ket{v_2}\\+(\sqrt{3}-\sqrt{2}+\sqrt{6}\im-\im)\ket{v_3}+(\sqrt{3}-\sqrt{2}-\sqrt{6}\im+3\im)\ket{v_4},
    \end{multlined}&\\
    \ket{\Phi_{1}}=&\begin{multlined}[t]
    (\sqrt{2}-\sqrt{3}+\sqrt{6}\im-\im)\ket{v_1}+(\sqrt{3}-\sqrt{2}+\sqrt{6}\im-3\im)\ket{v_2}\\+(2\sqrt{2}-\sqrt{3}-\im)\ket{v_3}+(-\sqrt{3}-\im)\ket{v_4}.
    \end{multlined}&
\end{flalign*}
\item[4.] The cluster state $\psi_5\cong\ket{C_1}$ and a $\tilde{W}$-equivalent version of itself:
\begin{flalign*}
\quad\qquad\qquad\ket{\Phi_{0}}&=e^{\pi\im/4}\ket{v_1}+e^{-\pi\im/4}\ket{v_2},& \\
\ket{\Phi_{1}}&=e^{-\pi\im/4}\ket{v_3}-e^{\pi\im/4}\ket{v_4}.& 
\end{flalign*}
\item[5.] The point $\psi_6\cong\ket{BSSB}$ and a $\tilde{W}$-equivalent version of itself:
\begin{flalign*}
\quad\qquad\qquad\ket{\Phi_{0}}&=e^{\pi\im/4}\ket{v_1}+e^{-\pi\im/4}\ket{v_2}+\ket{v_3}-\im \ket{v_4},& \\
\ket{\Phi_{1}}&=e^{\pi\im/4}\ket{v_1}+e^{-\pi\im/4}\ket{v_2}-\ket{v_3}+\im \ket{v_4}.&
\end{flalign*}
\end{itemize}
    \caption{Pairs of four-qubit critical states that give rise to five- and six-qubit AME states.}
    \label{tab:AME_constructors}
\end{table}

\subsection{Pure codes from stationary points}\label{sec:sparselist} Given a pair of critical states $\ket{\Phi_{0}},\ket{\Phi_{1}}\in\mathfrak{a}$ in the Cartan subspace of $\mathcal{H}_4$, we can attempt to construct a six-qubit AME state by defining
\begin{equation*}
\ket{\Phi}=\ket{00}\otimes\ket{\Phi_0}+\ket{01}\otimes\ket{\Phi_1}-\ket{10}\otimes\overline{\ket{\Phi_1}}+\ket{11}\otimes\overline{\ket{\Phi_0}},
\end{equation*}
where $v\mapsto \overline{v}$ is entrywise complex conjugation. The expression above was derived by reversing the construction of Rains (\Cref{prop:rains}) and can be explained as follows. If $\ket{\Phi}$ is an AME state, then its span is a $((6,1,4))$ pure code. By \Cref{prop:rains}, the image of
$
\text{Tr}_{\{1\}}(\ket{\Phi}\bra{\Phi})
$
is a $((5,2,3))$ pure code. That is, every point in the image corresponds to a five-qubit AME state. A natural orthogonal basis of this code is given by the vectors
\[
\ket{0_L}=\ket{0}\otimes\ket{\Phi_0}+\ket{1}\otimes\ket{\Phi_1},\quad\text{and}\quad \ket{1_L}=-\ket{0}\otimes\overline{\ket{\Phi_1}}+\ket{1}\otimes\overline{\ket{\Phi_0}}.
\]
Now \Cref{prop:rains} applies again: the image of \[\text{Tr}_{\{1\}}(\ket{0_L}\bra{0_L}+\ket{1_L}\bra{1_L})=\text{Tr}_{\{1\}}(\text{Tr}_{\{1\}}(\ket{\Phi}\bra{\Phi}))=\text{Tr}_{\{1,2\}}(\ket{\Phi}\bra{\Phi})\] is the $((4,4,2))$ code $\mathfrak{a}$. Note that, due to the sparsity of points in $\mathfrak{a}$, any five-qubit or six-qubit AME
state obtained in this way must be sparse: at least half of the entries of the state vectors are zero in the computational basis. We find that it is possible to construct a six-qubit AME state---and thus a family of pure codes---by using points from \Cref{tab:moreStatPts}.
Specifically, a six-qubit AME state arises from the pairs of points $(\ket{\Phi_0},\ket{\Phi_1})$ in \Cref{tab:AME_constructors}.
Here, we use the basis
\[
\ket{v_1}=\ket{u_1}+\ket{u_2},\quad \ket{v_2}=\ket{u_1}-\ket{u_2},\quad \ket{v_3}=\ket{u_3}+\ket{u_4},\quad \ket{v_4}=\ket{u_3}-\ket{u_4}.
\]
As before, we do not normalize vectors.

Six four-qubit critical states appear in \Cref{tab:AME_constructors}. Four of them, $\varphi_8\cong\ket{HS}$, $\varphi_9\cong\ket{HD}$, $\psi_5\cong\ket{C_1}$, and $\psi_6\cong\ket{BSSB}$ can be found in prior literature (\Cref{sec:selection}). Two are new to us, $\varphi_6$ and $\varphi_{10}$. Interestingly, these six are precisely the states in \Cref{tab:moreStatPts} that vanish on $\mathcal{F}_1$. The states $\varphi_8$, $\varphi_9$, and $\varphi_{10}$ also vanish on $\mathcal{F}_4$. The cluster state $\psi_5$ also vanishes on $\mathcal{F}_3$. The BSSB state $\psi_6$ additionally vanishes on $\mathcal{F}_3$ and $\mathcal{F}_6$.

The six-qubit AME state and the $((5,2,3))$ pure code are known to be unique up to local unitary equivalence \cites{746807,Huber2020}. Therefore, we cannot say that the AME states or the pure codes constructed here are new. We only claim that the construction is new.

\section{Discussion and conclusions}\label{sec:conclusions}
In this paper, we study the entanglement of four-qubit systems by considering various entanglement measures on pure state vectors. Given an entanglement measure $E:\mathcal{H}_4\to [0,\infty)$ one may ask if states that are already known to be highly entangled and useful are also highly entangled with respect to $E$. If there is evidence that $E$ detects a useful notion of entanglement, one might ask to find all local maximizers and stationary points of $E$ in the hope that these special points have interesting entanglement properties. Our approach is in the same vein as preceding work \cites{OsterlohHD,Borras_2007,GourWallach2010,jaffali2023maximally,HIGUCHI2000213}. However, we also extend these techniques, arguing that it is meaningful to find all stationary points of entanglement measures and not just the local maximizers. One reason for this is the observation in \Cref{sec:selection} that many four-qubit critical states that have been studied in the past show up as stationary points of naturally chosen entanglement measures. The cluster state, which exhibits useful entanglement properties, is \textit{not} a local maximizer of any of our entanglement measures, but it is a stationary point. Our exploration also reveals that a state can be local maximizer of some entanglement measure while only being a stationary point for another.

In \Cref{sec:sparselist}, we give evidence for the efficacy of our philosophy: six of the stationary points we compute in this paper give rise to pure codes by reversing a construction of Rains (\Cref{prop:rains}). This is notable since these points were obtained by solving a seemingly distant calculus problem. Of these six, four appear in prior literature for different reasons, while two are new to the best of our knowledge. We note that these six are precisely the ones among those stationary points found that vanish on the SLOCC invariant polynomial $\mathcal{F}_1$. Given the various interesting non-generic properties these six critical states possess, further study may be fruitful. The general question of when Rains's construction can be reversed is likely still open (see also Section 7 of \cite{Huber2020}).

We choose to focus on the case of four qubits because of the connection with Vinberg theory. In addition, we choose entanglement measures that arise from SLOCC invariant polynomials. This allows us to simplify and make tractable the computation of the stationary points. The idea is that the four-qubit state space $\mathcal{H}_4$ embeds as the grade 1 piece of the $\mathbb{Z}_2$-graded Lie algebra $\mathfrak{so}_8$. Consequently, the space $\mathcal{H}_4$ behaves nicely. For example, the algebra of SLOCC invariant polynomials on $\mathcal{H}_4$ is equivalent to the algebra of $W$-invariant polynomials on the four-dimensional Cartan subspace $\mathfrak{a}\subset\mathcal{H}_4$ (\Cref{prop:restriction}), where $W$ is a finite group known as the Weyl group. This simplifies the computation of SLOCC invariant polynomials and has already been applied to quantum information in previous work \cites{jaffali2023maximally,GourWallach2010}. Our main mathematical contribution is \Cref{prop:crit}, which in conjunction with \Cref{prop:critorbit} and \Cref{prop:restriction} allows us also to simplify the calculation of stationary points of SLOCC invariant polynomials. This deepens the connection between Vinberg theory and the Kempf-Ness theorem studied by Wallach \cite{WallachGIT}.

\section{Acknowledgments} We thank Jonathan Hauenstein for helping us carry out a computation with Bertini.

\newcommand{\arxiv}[1]{\href{http://arxiv.org/abs/#1}{{\tt arXiv:#1}}}
\bibliographystyle{amsplain}
\bibliography{ref}

\end{document}